\begin{document}

%
\title{Query-driven Frequent Co-occurring Term Extraction over Relational Data using MapReduce}
\numberofauthors{5} 

\author{%
\alignauthor Jianxin Li \\
					\affaddr{Swinburne University of Technology}\\
					\affaddr{Melbourne, Australia}\\
					\email{jianxinli@swin.edu.au}
\alignauthor Chengfei Liu \\
					\affaddr{Swinburne University of Technology}\\
					\affaddr{Melbourne, Australia}\\
					\email{cliu@swin.edu.au}
\alignauthor Liang Yao \\
					\affaddr{Swinburne University of Technology}\\
					\affaddr{Melbourne, Australia}\\
					\email{liangyao@swin.edu.au}
\and 
\alignauthor Jeffrey Xu Yu \\
					\affaddr{Chinese University of Hong Kong}\\
					\affaddr{Hong Kong, China}\\
					\email{yu@se.cuhk.edu.hk}
\alignauthor Rui Zhou \\
					\affaddr{Swinburne University of Technology}\\
					\affaddr{Melbourne, Australia}\\
					\email{rzhou@swin.edu.au}
}

\date{18-23 March 2013}

\newdef{definition}{Definition}
\newtheorem{theorem}{Theorem}
\newtheorem{example}{Example}
\newtheorem{corollary}{Corollary}
\newtheorem{lemma}{Lemma}
\newtheorem{property}{Property}

\maketitle
\begin{abstract} 
In this paper we study how to efficiently compute \textit{frequent co-occurring terms} (FCT) in the results of a keyword query in parallel using the popular MapReduce framework. 
Taking as input a keyword query q and an integer k, an FCT query reports the k terms that are not in q, but appear most frequently in the results of the keyword query q over multiple joined relations. The returned terms of FCT search can be used to do query expansion and query refinement for traditional keyword search. Different from the method of FCT search in a single platform, our proposed approach can efficiently answer a FCT query using the MapReduce Paradigm without pre-computing the results of the original keyword query, which is run in parallel platform. In this work, we can output the final FCT search results by two MapReduce jobs: the first is to extract the statistical information of the data; and the second is to calculate the total frequency of each term based on the output of the first job. At the two MapReduce jobs, we would guarantee the load balance of mappers and the computational balance of reducers as much as possible. 
Analytical and experimental evaluations demonstrate the efficiency and scalability of our proposed approach using TPC-H benchmark datasets with different sizes.

 \end{abstract}

%
\section{Introduction}

Recently, analyzing and querying big data are attracting more and more research attentions, which can provide valuable information to company and personal customers.
For example, \cite{DBLP:conf/icde/ChandramouliGD12} combines a time-oriented data processing system with a MapReduce framework, which can allow users to perform analytics using temporal queries - these queries are succinct, scale-out-agnostic, and easy to write.
\cite{DBLP:journals/pvldb/CohenDDHW09} presents data parallel algorithms for sophisticated statistical techniques, with a focus on density methods, which enables agile design and 
flexible algorithm development using both SQL and MapReduce interfaces over a variety of storage mechanisms.
\cite{DBLP:journals/pvldb/AgrawalDA10,DBLP:journals/pvldb/Campbell11,DBLP:conf/pods/Chaudhuri12} summarize the state-of-the-art scalable data management
systems for traditional and cloud computing infrastructures. \cite{DBLP:journals/pvldb/AgrawalDA10} highlights update heavy and analytical workloads.
\cite{DBLP:journals/pvldb/Campbell11} introduces some important application examples for big data in real life.
\cite{DBLP:conf/pods/Chaudhuri12} describes six data management research challenges relevant for big data and the cloud.
Differently, in this paper our target is to address the problem of query-driven frequent co-occurring term extraction over big data, i.e., computing the frequent co-occurring terms in the results of a given keyword query. The returned frequent co-occurring terms, as the informative feedbacks, can be used to refine the original keyword query before the exact result set of the original query is retrieved.

Since traditional keyword search often assumes that the data sets should be loaded and processed in memory, it is not suitable to deal with keyword search over big data.
The challenges come from three points: (1) the ambiguity of keyword query limits the expressiveness of search intention and may lead to a large number of uninteresting results, which may make the users frustrated easily;
 (2) evaluating keyword query over big data requires scalable computational paradigm where a parallel platform is desirable;
 (3) generally, only simple index can be built for big data in practice due to the huge space cost.
 Due to the above challenges, there are only a few works to discuss the problem of keyword search over big data.
 \cite{DBLP:journals/vldb/QinYC11} addresses scalable keyword search on large data streams by pruning the unqualified tuples in a scalable method based on selection/semi-join strategy \cite{Bernstein:1981:USS:322234.322238}. \cite{DBLP:journals/pvldb/BaidRLDN10} addresses scalable keyword search over relational data by returning part of results, rather than the whole set of results, within the specified short time. 

By extracting frequent co-occurring terms of a given original keyword query, we can have two advantages naturally.
On the search engine side, it is more profitable to constrain users to a specific set of results by exploiting the frequent co-occurring terms of the issued keyword queries from big data if the users are also interested in the extracted terms, which can save lots of computational resources. On the user side, the users can easily discover the concepts that are closely associated with the given keyword set by extracting the frequent co-occurred terms from the big data, which is helpful for the users to easily understand their interesting information in the big data.

However, extracting such Frequent Co-occurring Terms (FCT) of a given keyword query is challenging today, as there is an increasing trend of applications being expected to deal with vast amounts of data that usually do not fit in the main memory of one machine, e.g., the Google N-gram dataset \cite{web:gramcorpus} and the GeneBank dataset \cite{web:genbank} that contains 100 million records with the total size of 416 GB. Applications with such datasets usually make use of clusters of machines and employ parallel algorithms in order to efficiently deal with this vast amount of data. For data-intensive applications, the MapReduce \cite{DBLP:journals/cacm/DeanG08} paradigm has recently received a lot of attention for being a scalable parallel shared-nothing data-processing platform. The framework is able to scale to thousands of nodes. In this paper, we use MapReduce as the parallel data-processing paradigm for extracting the frequent co-occurring terms with regards to a given keyword query over a big data.

We know that each keyword search result of a keyword query in relational database includes a set of tuples which are retrieved from a single relation or several joined relations, and contains all the given keywords of the keyword query.
Intuitively, given a keyword query and a big data, we can first compute the large number of keyword search results and then calculate the total frequencies for each term in the result set. At last, all the terms can be sorted by their frequencies and the top-$k$ frequent terms can be found. But this straightforward solution may make the feedbacks of the returned terms meaningless to the users because the highly time-consuming evaluation of keyword query over big data may delay the feedbacks greatly.

 To reduce the processing time, we propose a new FCT approach, which can avoid the procedure of direct keyword query evaluation using the idea of star algorithm in \cite{DBLP:conf/edbt/TaoY09}. Furthermore, our new approach can efficiently explore the statistical information from big data and compute the frequencies of the co-occurring terms using MapReduce.

The main contributions in this paper are summarized as follows.
\begin{itemize}
\item we propose a novel MapReduce-based approach to efficiently compute the query-driven frequent co-occurring terms over big data.
\item we propose two shuffling strategies to guarantee the load/ computational balance of mappers and reducers by considering both the uniformed data distribution and the uneven data distribution.
\item we conducted extensive performance studies to demonstrate the scalability of our proposed approach using TPC-H benchmark dataset.
\end{itemize}


The remainder of this paper is organized as follows. 
In Section~\ref{sec:preliminary}, we introduce the working procedure of MapReduce framework and an optimized multiway join in MapReduce in more details.
We define the problem of query-driven frequent term extraction (denoted as FCT search) in Section~\ref{sec:problemdefinition}. 
Section~\ref{sec:mrfct} firstly discusses the partition strategies for uniformed data distribution and uneven data distribution. And it then presents the procedures of computing frequent co-occurring terms of a query using MapReduce step by step. It lastly proves the correctness and completeness of the MapReduce-based FCT search. We provide the implementation algorithms of our approach in Section~\ref{sec:implementation}. 
Section~\ref{sec:experiments} presents the performance studies. 
Finally, we discuss related work in Section~\ref{sec:relatedwork} and conclude in Section~\ref{sec:conclusions}. 
 
\section{Preliminaries}\label{sec:preliminary}

\subsection{MapReduce Framework}

\begin{figure}[htbp]
  \centering
  \includegraphics[scale=0.55]{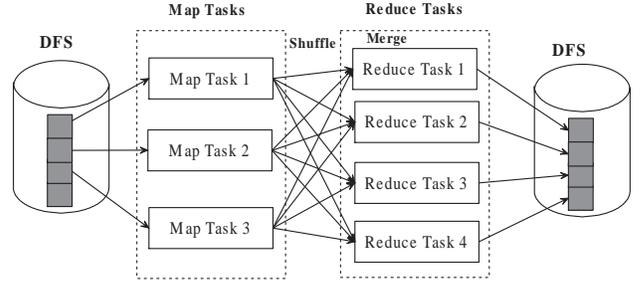}
  \caption{Overview of MapReduce}
  \label{fig:mrstructure}
  \end{figure}
  
 MapReduce~\cite{DBLP:journals/cacm/DeanG08} is a popular paradigm for data-intensive
parallel computation in shared-nothing clusters. Example
applications for the MapReduce paradigm include processing
crawled documents, Web request logs, etc. In the open source
community, Hadoop \cite{web:hadoop} is a popular implementation of
this paradigm. In MapReduce, data is initially partitioned
across the nodes of a cluster and stored in a distributed file
system (DFS). Data is represented as $(key, value)$ pairs.
The computation is expressed using two functions:

\begin{tabular}{llll}
map & $(k_1, v_1)$ & $\rightarrow$ & $list(k_2,v_2)$;\\
reduce & $(k_2, list(v_2))$ & $\rightarrow$ & $list(k_3, v_3)$.\\
\end{tabular}

Figure 1 shows the data flow in a MapReduce computation.
The computation starts with a map phase in which the
map functions are applied in parallel on different partitions 
of the input data. The $(key, value)$ pairs output by each
map function are hash-partitioned on the key. For each partition
the pairs are sorted by their key and then sent across
the cluster in a shuffle phase. At each receiving node, all
the received partitions are merged in a sorted order by their
key. All the pair values that share a certain key are passed
to a single reduce call. The output of each reduce function
is written to a distributed file in the DFS.
Besides the map and reduce functions, the framework also
allows the user to provide a combine function that is executed
on the same nodes as mappers right after the map
functions have finished. This function acts as a local reducer,
operating on the local $(key, value)$ pairs. This function allows
the user to decrease the amount of data sent through
the network. The signature of the combine function is:
combine $(k_2,list(v_2)) \rightarrow list(k_2, list(v_2))$.
Finally, the framework also allows the user to provide initialization
and tear-down function for each MapReduce function
and customize hashing and comparison functions to be used
when partitioning and sorting the keys. From Figure 1 one
can notice the similarity between the MapReduce approach
and query-processing techniques for parallel DBMS \cite{DBLP:journals/cacm/DeWittG92,DBLP:conf/sigmod/PavloPRADMS09}.

\subsection{Lagrangean Multipliers based Multiway Joins in MapReduce}


To learn how to optimize map-keys for a multiway join in \cite{DBLP:conf/edbt/AfratiU10}, 
let us begin with a simple example: the cyclic join
$R(A,B)$ $\Join$ $S(B,C)$ $\Join$ $T(A,C)$.
Suppose that the target number of map-keys is $k$. That
is, we shall use $k$ Reduce processes to join tuples from the
three relations. Each of the three attributes $A$, $B$, and $C$
will have a share of the key, which we denote $a$, $b$, and $c$,
respectively. We assume there are hash functions that map
values of attribute $A$ to $a$ different buckets, values of $B$ to $b$
buckets, and values of $C$ to $c$ buckets. We use h as the hash
function name, regardless of which attribute's value is being
hashed. Note that $abc = k$.

Consider tuples $(x, y)$ in relation $R$. Which Reduce processes need to know about this tuple? Recall that each Reduce process is associated with a map-key $(u, v, w)$, where
$u$ is a hash value in the range 1 to $a$, representing a bucket
into which $A$-values are hashed. Similarly, $v$ is a bucket in
the range 1 to $b$ representing a $B$-value, and $w$ is a bucket in
the range 1 to $c$ representing a $C$-value. Tuple $(x, y)$ from $R$
can only be useful to this reducer if $h(x)$ = $u$ and $h(y)$ = $v$.
However, it could be useful to any reducer that has these
first two key components, regardless of the value of $w$. We
conclude that $(x, y)$ must be replicated and sent to the $c$ different reducers corresponding to key values $(h(x), h(y), w)$,
where $1 \leq w \leq c$.
Similar reasoning tells us that any tuple $(y, z)$ from $S$ must be sent to the a different reducers corresponding to map-keys $(u, h(y), h(z))$, for $1 \leq u \leq a$. Finally, a tuple $(x, z)$ from $T$ is sent to the $b$ different reducers corresponding to map-keys
$(h(x), v, h(z))$, for $1 \leq v \leq b$.
 
This replication of tuples has a communication cost associated with it. The number of tuples passed from the Map processes to the Reduce processes is $rc + sa + tb$ where $r$, $s$, and $t$ are the numbers of tuples in relations $R$, $S$, and $T$, respectively.
Therefore, The optimization problem is to minimize the overall cost:
\begin{displaymath}
\text{Minimize }F(x) = rc + sa + tb\text{ subject to }abc = k 
\end{displaymath}
where $a$, $b$, and $c$ are the numbers of buckets of relations, and $k$ is the number of reduce tasks.

The method of Lagrangean multipliers serves us well.
That is, we start with the expression
$rc + sa + tb - \lambda(abc - k)$, and
take derivatives with respect to the three variables, a, b, and
c, and set the resulting expressions equal to 0. The result is
three equations:
$s = \lambda bc$ $=>$ $sa = \lambda k$;
$t = \lambda ac$ $=>$ $tb = \lambda k$;
$r = \lambda ab$ $=>$ $rc = \lambda k$.
If we multiply the left sides of the three equations and set
that equal to the product of the right sides, we get $rstk = \lambda^3k^3$ (remembering that $abc$ on the left equals k). We can now
solve for $\lambda = \sqrt[3]{rst/k^2}$. From this, the first equation $sa=\lambda k$ 
yields $a = \sqrt[3]{krt/s^2}$. Similarly, the next two equations yield
$b = \sqrt[3]{krs/t^2}$ and $c = \sqrt[3]{kst/r^2}$. When we substitute these
values in the original expression to be optimized,
$rc + sa + tb$, we get the minimum amount of communication
between Map and Reduce processes: $3\sqrt[3]{krst}$.



\section{Problem Definition}\label{sec:problemdefinition}
We consider that the database has $n$ tables $R_1$, $R_2$, ..., $R_n$,
referred to as the raw tables. Their referencing relationships
are summarized in a schema graph:

\begin{definition} (Schema Graph) The schema graph
is a directed graph $G$ such that (1) $G$ has $n$ vertices, corresponding to tables $R_1$, ..., $R_n$, respectively, and (2) $G$ has an edge from vertex $R_i$ to vertex $R_j$ ($1 \leq i \neq j \leq n$), if and only if $R_j$ has a foreign key referencing a primary key in $R_i$.
\end{definition}

\begin{definition} (Joining Network of Tuples)
A joining network of tuples $j_n$ is a tree of tuples where for each pair of adjacent tuples $t_i$, $t_j$ $\in j_n$, where $t_i \in R_i$, $t_j\in R_j$, there is an edge ($R_i, R_j$) in $G$ and ($t_i \Join t_j$) $\in$ ($R_i \Join R_j$).
\end{definition}

\begin{definition} (Keyword Query)
Given a keyword query, its result is the set of all possible joining networks of tuples that are both: (1) Total - every keyword is contained in at least one tuple of the joining network; (2) Minimal - we cannot remove any tuple from the joining network and still have a total joining network of tuples.
\end{definition}

As such, we can call such joining networks as \textit{Minimal Total Joining Networks of Tuples} (MTJNT) of the keywords in keyword query. Each MTJNT is a result instance of keyword query. To improve the efficiency of keyword query, we can group the tuples of each relation based on their contained query keywords.

\begin{definition} (Joining Network of Tuple Sets)
A joining network of tuple sets $J_n$ is a tree of tuple sets where for each pair of adjacent tuple sets $R_i^{K_i}$, $R_j^{K_j}$ in $J_n$, there is an edge ($R_i, R_j$) in $G$.
\end{definition}

Here, $R_i^{K_i}$ represents the set of tuples of relation $R_i$ where each tuple contains a partial query keyword set $K_i$ while $R_j^{K_j}$ represents the set of tuples of relation $R_j$ where each tuple contains a partial query keyword set $K_j$.

\begin{definition} (Candidate Network) 
Given a keyword query, a candidate network $C$ is a joining network
of tuple sets, such that there is an instance $I$ of the
database that has a MTJNT $M \in C$ and no tuple $t \in M$
that maps to a free tuple set $F \in C$ contains any keywords.
\end{definition}

As the example shown in \cite{DBLP:conf/vldb/HristidisP02}, for a keyword query
``Smith, Miller'', $J$ = ORDERS$^{Smith}$ $\Join$ CUSTOMER$^{\{\}}$ $\Join$ ORDERS$^{\{\}}$ is not a candidate network even though there is a MTJNT that belongs to $J$ because $J$ is subsumed by ORDERS$^{Smith}$ $\Join$ CUSTOMER$^{\{\}}$ $\Join$ ORDERS$^{Miller}$. Here, CUSTOMER$^{\{\}}$ and ORDERS$^{\{\}}$ denote free tuple sets.



\begin{definition} (Top-$k$ FCT Retrieval of Keyword Query) Given a keyword query $q$, a number $R_{max}$, and an integer $k$, a frequent co-occurring term (FCT) query returns the $k$ terms with the highest frequencies among all terms that (1) are not in $q$ and (2) in the results of the keyword query $q$ w.r.t. the maximal number $R_{max}$ of joined relations. 
\end{definition}

To the problem of FCT retrieval, a straightforward solution is to first solve the corresponding keyword query, and then extract the term frequencies. However, the solution would incur expensive cost because it needs to completely evaluate all the joins - the minimum total join networks (MTJNTs) of the corresponding keyword query. 
To reduce the computational cost, \cite{DBLP:conf/edbt/TaoY09} proposes a $star$ method to efficiently calculate the term frequencies without complete join evaluation. It first obtains all the candidate networks (CNs) of the keyword query by using the CN-generation algorithm in \cite{DBLP:conf/vldb/HristidisP02}. And then it computes the term frequencies for each CN. At last, all the computed term frequencies are summarized into the total term frequencies with regards to the FCT query over the data to be searched.

Let us use $h$ to represent the number of CNs. And a CN can be regarded as an algebraic expression, which retrieves a set of MTJNTs. We deploy MTJNT$(CN_i)$ to denote the set of MTJNTs resulting from executing $CN_i$ ($1 \leq i \leq h$) where we have MTJNT$(CN_i)$ $\bigcap$ MTJNT$(CN_j)$ = $\phi$ for any $1 \leq i \neq j \leq h$. That is to say, no MTJNT can be output by two CNs at the same time. Therefore, the keyword search result set can be defined as follows.

\begin{equation}
\text{KSResult}(q) = \bigcup _{i=1}^{h} \text{MTJNT}(CN_i).
\label{equ:ksresult}
\end{equation}
Let freq-CN$(CN_i, w)$ be the total number of occurrences for term $w$ in all the MTJNTs of MTJNT$(CN_i)$, or formally:
\begin{equation}
\text{freq-CN}(CN_i, w) = \sum_{\forall T\in \text{MTJNT}(CN_i)} Count(T, w).
\label{equ:cnfrequency}
\end{equation}
where $Count(T, w)$ is defined as the number of occurrences of $w$ in a single MTJNT $T$. Thus, the total frequency freq$(q, w)$ can be calculated as:
\begin{equation}
\text{freq}(q, w) = \sum_{i=1}^{h} \text{freq-CN}(CN_i, w).
\label{equ:queryfrequency}
\end{equation}
 
According to the above equation, the FCT retrieval (freq$(q, w)$) can be efficiently answered by alternatively calculating the term frequencies (freq-CN$(CN_i, w)$) of each candidate network $CN_i$. Specifically, freq-CN$(CN_i, w)$ can be calculated efficiently when $CN_i$ is a star candidate network where a vertex, called the root, connects to all the other vertices, called the leaves.  

Although the \textit{star} method is much better than the straightforward one, it is still expensive to compute FCT retrieval because (1) it needs to scan the data twice: scanning data for making statistical information and scanning data for computing the frequencies of terms in data; (2) it is a single-machine based approach, by which long-time processing will be incurred when the data to be processed is massive; (3) If $CN_i$ is not a star candidate network, it has to select some relations to do join operations exactly, which is used to make star-conversion. 
In this paper, we study the problem of FCT retrieval in the parallel environment, i.e., MapReduce framework, which can improve the performance greatly. In the following section, we mainly focus on the complex computation of FCT retrieval for star candidate networks. For the non-star candidate network, we can evaluate some relations to be selected using the repartition join strategy in MapReduce, which is easy to be implemented \cite{DBLP:conf/sigmod/BlanasPERST10}.  

\section{MapReduce-based FCT Search}\label{sec:mrfct}

Different from the \textit{star} method, we load and process the data in parallel using MapReduce. However, making the change is not trivial because MapReduce is a shared-nothing parallel data processing paradigm, and the performance of the FCT search would depend on whether the strategy of data partition is good or not. 
For each CN$_i$, we can get its relevant term frequencies by two MapReduce jobs. 

In this section, we first propose our optimal data partition strategy, by which the data can be distributed over the process nodes with the minimal duplications while it guarantees there is no communication cost among the process nodes in the period of evaluating the FCT search.
 And then, we describe the procedures of the two MapReduce jobs for aggregating the total term frequencies.
 At last, we analyze the properties of MapReduce-based FCT search approach.  

\subsection{Uniformed Distribution-based Shuffling \\ Strategy}\label{subsec:uniformed}
For the star-scheme join, we can still utilize the Lagrangean Multipliers based Join strategy to partition the data. For example, given a set of relations having R(A,B,C) $\Join$ S(A,E) $\Join$ T(B,F) $\Join$ P(C,G), the cost expression could be 
\begin{displaymath}
r + sbc + tac + pab 
\end{displaymath}
and the Lagrangean equations are: 
\begin{displaymath}
tac + pab = \lambda k, sbc + pab = \lambda k, sbc + tac = \lambda k 
\end{displaymath}
where $k = abc$. After making the pair-comparisons, we can get the transformed equations: $s/a = t/b = p/c$. Thus, the minimum-cost solution has shares for each variable proportional to the size of the dimension table in which it appears. That is to say, the map-keys partition the fact table into $k$ parts, and each part of the fact table gets equal-sized pieces of each dimension table with which it is joined. As a result, we can derive $a = \sqrt[3]{ks^2/tp}$, $b = \sqrt[3]{kt^2/sp}$ and $c = \sqrt[3]{kp^2/st}$. 

For instance, if we have 9 PC as processors, i.e. k=9, and each dimensional table contains 10,000 tuples, then each dimensional table can be splitted into 2 partitions, i.e., $S_0$, $S_1$, $T_0$, $T_1$, $P_0$, and $P_1$. Obviously, the fact table with the maximal size of $10,000^3$ ($10^{12}$) can be approximately distributed into the 8 processors only when the data in dimensional tables are distributed uniformly, i.e., one of 8 processors need to process $1.25*10^{11}$ joining operations maximally. 
The processors can be labelled as follows:
  \scalebox{0.8}{
  \begin{tabular}{|c|c|c|c|}  
    \hline
    000 & 001 & 010 & 011\\
    100 & 101 & 110 & 111\\
    \hline
    \end{tabular}
    }
	
	However, if one of the dimensional table contains skewed tuples, then some processors will become much hotter while the rest may be idle. Even if we can add more nodes to increase the scalability of the system, it does not solve the skew problem because all skewed tuples will still be sent to the hot processors. Following the above instance, if only the tuples in the first partition of dimensional table $S$ appear in the fact table $R$, then the half processors with labels $100, 101, 110, 111$ will be always idle. That is to say, each processor of $000, 001, 010, 011$ has to deal with $2.5*10^{11}$ joining operations maximally. This case often happens in star-scheme join operations because it is difficult to guarantee that data in all the dimensional tables are even distributed.
	To address the unbalance of computation, intuitively we can split the data into more bulks (small partitions) that can be distributed as evenly as possible. However, it is not easy to manage the scheduling of the big number of bulks, particularly when the bulk size is small.

\subsection{Uneven Distribution-based Shuffling \\ Strategy}\label{subsec:uneven}
 There are lots of work to process data skew in parallel joins in database systems. However, most of the existing work primarily focus on the join of two relations (R$\Join$S). Althoug their approaches can be applied to the join of multiple relations by repeated operations, the long-processing time would become challenging to some extent. Specifically, in this work our main problem is to address star-scheme join that consists of one fact table and more dimensional tables. From the study on the fact table, we can observe that the joined attributes in a fact table often contain some skewed tuples over one or several joined dimensions. For example, an airline can provide the booking service to travel agents and personal customers at the same time. Generally, a travel agent may book thousands of tickets per year, but a personal customer can book only a few tickets per year. If we treat the two types of customers equally, then the processors dealing with travel agents would be hot and the rest processors dealing with personal customers would be cooling (most of time, they are idle.) The output of hot processors will greatly decrease the overall performance of parallel join.   
 
  
 Consider a fact relation R(A,B,C) and two dimensional relations S(A,E) and T(B,F). Assume S can be splitted into three partitions: $S_0$, $S_1$, and $S_2$, and T is splitted into four partitions: $T_0$, $T_1$, $T_2$ and $T_3$. As such, we have 12 reduce tasks that need to be computed. We need to answer: how to distribute the 12 reduce tasks into the $k$ reducers (e.g., k=9)? We should be reminded that some reduce tasks does not produce joined results or only generate a few due to the data skew. To do this, we propose a cost model to evaluate the computational cost of each reduce task. Any formula for estimating the cost of a join could be used. Here, we chose the simple technique of estimating that
 
 \begin{math}
 c_{ij} = |R_{ij}|_{est} + |S_i|_{est} + |T_j|_{est} + |R_{ij} \Join S_i \Join T_j|_{est}
 \end{math}
 
 where $|R_{ij}|_{est}$ is an estimate of the number of R tuples mapped to the reduce task labelled as $ij$, $|S_i|_{est}$ is an estimate of the number of S tuples mapped to the reduce tasks labelled as $i?$ (the question mark means that all possible reduce tasks whose label starts with $i$), $|T_j|_{est}$ is an estimate of the number of T tuples mapped to the reduce tasks labelled as $?j$ (similarly, the question mark means that all possible reduce tasks whose label ends with $j$), $|R_{ij} \Join S_i \Join T_j|_{est}$ is an estimate of the number of tuples in $R_{ij} \Join S_i \Join T_j$. We compute this estimate of the size of $R_{ij} \Join S_i \Join T_j$ by assumping that the join attribute values in each join dimension $R_{ij}$ are uniformly distributed. Once this estimate for the cost of the joining of the reduce tasks are computed, any task scheduling algorithm can be used to try to balance the computational cost of reducers. In this work, we adopt a heuristic method to schedule the reduce tasks.
 For example, we have 5 reduce tasks with their estimated costs to be computed over two reducers. Then they will be assigned as shown in Figure~\ref{fig:unevenassign}.
  
  \begin{figure}[htbp]
  \centering
  \includegraphics[scale=0.55]{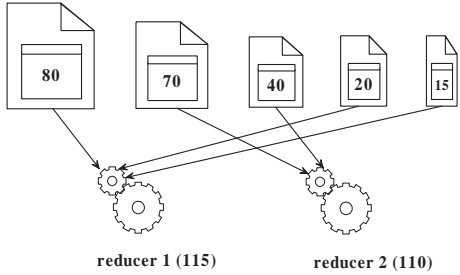}
  \caption{Example of Assigning Reduce Tasks}
  \label{fig:unevenassign}
  \end{figure}

 To efficiently estimate the cost, in this paper we employ \textit{Simple Random Sample} to choose the tuples from R, S and T. Although there are also other classific  sampling strategies, e.g., Stratified sampling and Systemantic sampling, they may introduce more workload. The detailed comparison of sampling strategies is out of the scope of this paper. 
 
After the reduce tasks are virtually placed to the reducers in balance, the master node will construct an allocation table to maintain the scheduling relationships between bulks and reducers. For example, if the reduce tasks $R_{00}$ and $R_{01}$ are grouped into the same reducer (e.g., $\#reducer = 1$) together, then the reducer with $\#reducer = 1$ will pull the tuple sets $R_{00}$ and $R_{01}$ from the corresponding mappers. At the same time, it also pull the tuple sets $S_0$, $T_0$ and $T_1$ from the corresponding mappers.



\subsection{Computing the Statistical Information at MapReduce$^{1st}$}

\subsubsection{Brief Procedure of Computing Statistical Information}
At MapReduce$^{1st}$, the map function gets as inputs the original tuples. For each tuple in dimensional relations, the function extracts the join attribute as the key $k_2$, and the texts of the rest attributes as the value $v_2$. To minimize the network traffic between the map and reduce functions, we use 
a combine function to aggregate the values with the same key output by the map function into a single value. And a number is appended to the aggregated value as the local frequency of the join attribute appearing in the current split. If the join attribute is the primary key of the 
corresponding relation, then the combine process can be skipped.

\begin{figure*}[t]
  \centering
  \subfigure[Star-CN]{\label{fig:starcn}
    \includegraphics[scale=0.6]{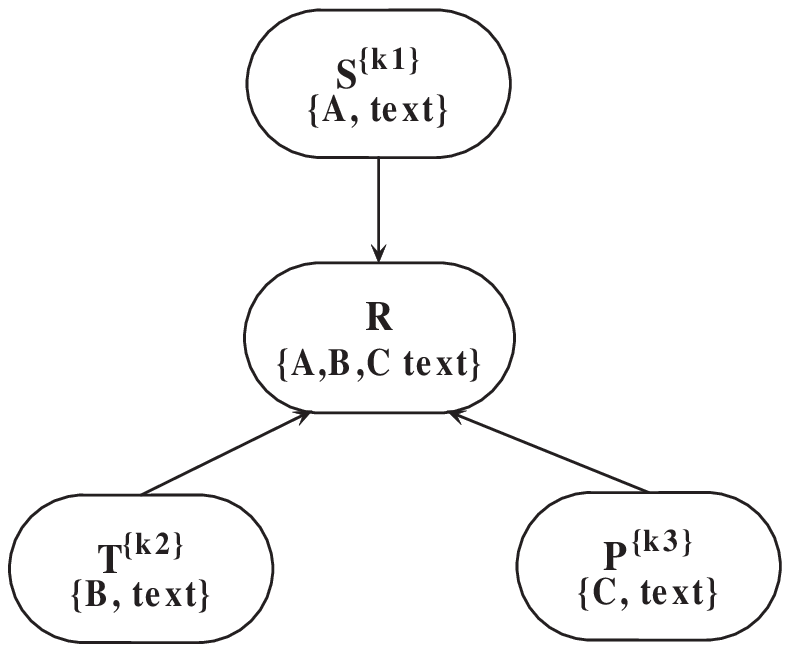}}
  \subfigure[$S^{\{k_1\}}$]{\label{fig:stable}
    \includegraphics[scale=0.6]{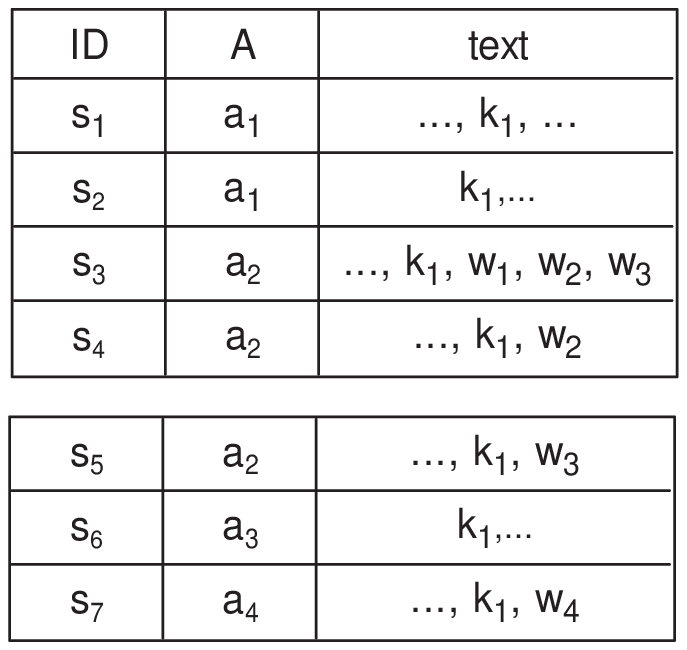}} 
  \subfigure[$T^{\{k_2\}}$]{\label{fig:ttable}
    \includegraphics[scale=0.6]{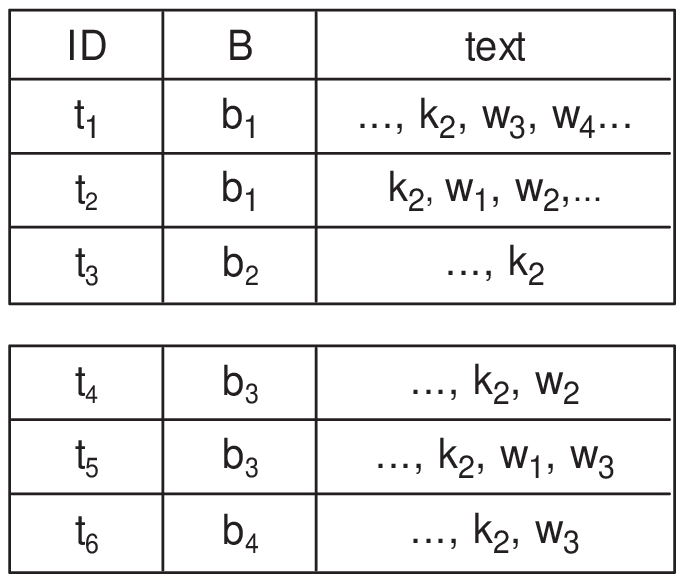}}  
 \\
  \subfigure[$P^{\{k_3\}}$]{\label{fig:ptable}
    \includegraphics[scale=0.6]{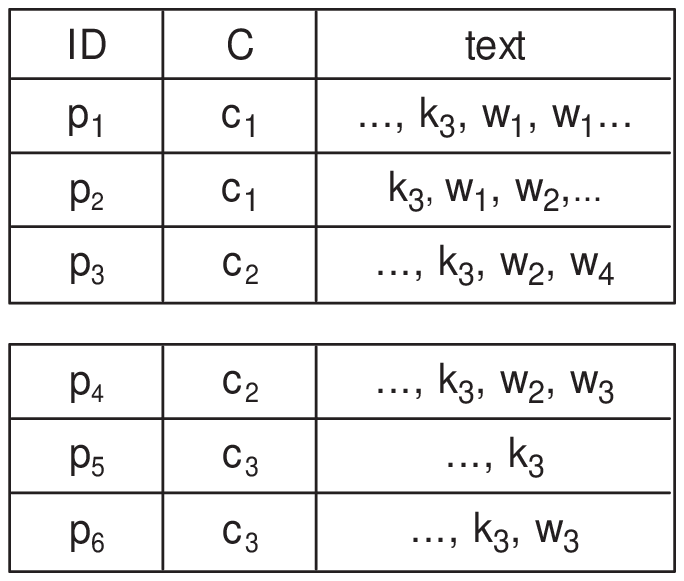}} 
  \subfigure[$R^{\phi}$]{\label{fig:rtable}
    \includegraphics[scale=0.6]{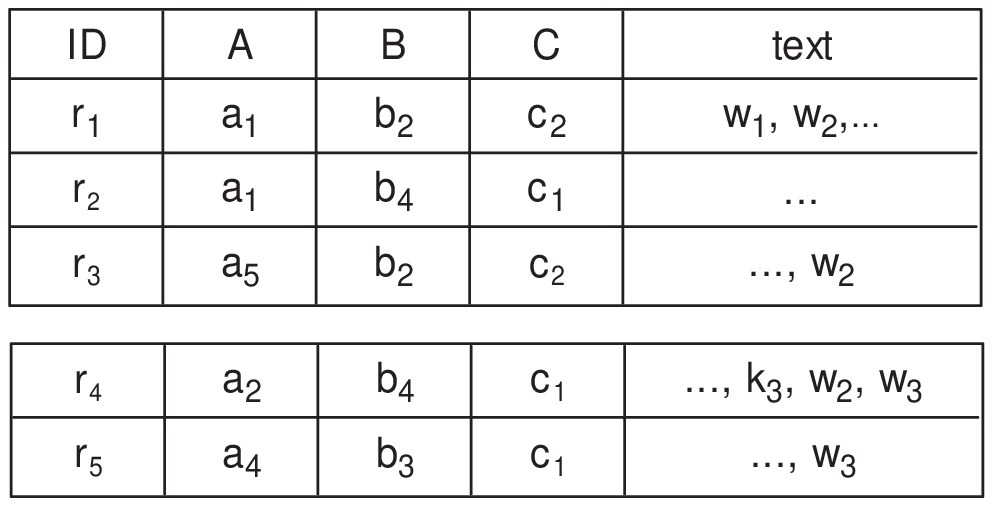}}    
         \caption{A running example for a query $\{k_1, k_2, k_3\}$}
  \label{fig:starjoin} 
\end{figure*}

Consider the example in Figure~\ref{fig:starjoin}. To make the load balanced, we assume that each table is splitted into two subtables with the equal size as much as possible, e.g., for $S^{\{k_1\}}$, the first partition size is $\left\lceil |S^{\{k_1\}}|/2 \right\rceil$ = 4 while the second partition is $|S^{\{k_1\}}| - \left\lceil |S^{\{k_1\}}|/2 \right\rceil$ = 3.
The map function can read the tuples $s_1$, $s_2$, $s_3$, $s_4$ and generate the corresponding key-value pairs for the first partition in Figure~\ref{fig:stable}. For $s_1$, the output key-value pair is ($a_1$, $''..., k_1, ...''$). For $s_2$, the key-value pair is ($a_1$, $''k_1, ...''$). Similarly, the rest two tuples generate key-value pairs taking $a_2$ as the key and the corresponding texts as their values respectively. By the combine function, we can aggregate the key-value pairs having the same keys, which can compress the data to be sent, e.g., the pairs of $s_1$ and $s_2$ can be aggregated into ($a_1$, $''..., k_1, ..., k_1, ... | 2''$) where the symbol $|$ is used to seperate the aggregated text information and the local frequency of the tuples (e.g., $s_1$ and $s_2$) with the same join attribute key (e.g., $a_1$); the rest two pairs can be aggregated into ($a_2$, $''..., k_1, w_1, w_2, w_3, ..., k_1, w_2 | 2''$). In this procedure, the query keywords (e.g., $k_1$) can be pruned directly, which does not affect the following calculation. But we keep it in the example in order to make the aggregation to be understood easily.

For the tuples in a fact relation, the map function extracts all the join attributes as the composite key $k_2$, and the texts of the rest attributes as the aggregated value $v_2$. Similarly, the combine function can be applied to aggregate the values of the tuples having the same set of join attributes in order to minimize the network traffic between the map and reduce functions. For example, the map function can output a key-value pair ($a_1|b_2|c_2$, $''w_1, w_2, ...''$) for the tuple $r_1$ in Figure~\ref{fig:rtable}. Similarly, we can transform the other tuples into key-value pairs.

Subsequently, the reduce function computes the statistical information for each fact tuple and its corresponding dimension tuples. Firstly, it pulls all the dimension tuples from the DFS files and counts the number of tuples sharing the same join attribute for each dimension relation, which are stored in a vector, denoted as \textit{num-array}. 
And then, it deals with the fact tuples one by one at each reducer. For each fact tuple, we need to probe the \textit{num-arrays} of its corresponding dimension relations for producing the cardinalities of the join attributes, which are stored in vectors, denoted as \textit{vol-arrays}. 
After all the fact tuples are processed at a reducer, it generates one \textit{vol-array} for each dimension relation and one \textit{vol-array} for the fact relation, which can be used to compute the frequencies of terms co-occurred with the query keywords at MapReduce$^{2nd}$.  

\subsubsection{Allocate Data to Reduce Tasks Evenly}
To understand the reduce function, we need to answer two questions. The first one is: how can we allocate the data to the reduce tasks as even as possible? 
Let's consider the discussion in Section~\ref{subsec:uniformed} again. If each dimension table is splitted into two partitions to be processed by reducers in parallel, then it will produce $2^3$ reduce tasks that are labelled as 
\scalebox{0.8}{
  \begin{tabular}{|c|c|c|c|}  
    \hline
    000 & 001 & 010 & 011\\
    100 & 101 & 110 & 111\\
    \hline
    \end{tabular}
    }. 
A possible way is to label the distinct join attributes with different numbers. And then, a hash function can be used to make the shuffle of the data. For example, there are four distinct attributes in column $A$. We can assign the numbers as $a_1 \rightarrow 1$, $a_2 \rightarrow 2$, $a_3 \rightarrow 3$, and $a_4 \rightarrow 4$. If we still want to split the data into two partitions, then the hash function can be designed as $h(a_i)=getNum(a_i) \text{ MOD } 2$
where $getNum(a_i)$ is used to get the number of the attribute $a_i$ in $A$ column. Based on the hash function, we have $h(a_1) = 1$, $h(a_2) = 0$, $h(a_3) = 1$, and $h(a_4) = 0$ respectively. Similarly, we can design hash functions for the join attributes in column $B$ and column $C$. 

According to the designed hash function, we can distribute the tuples in the fact table and dimension tables into different reduce tasks. For the key-value pair ($a_1|b_2|c_2$, $''w_1, w_2, ...''$) to be generated by the tuple $r_1$, it will be allocated to $h^A(a_1)h^B(b_2)h^C(c_2) = 100$. At the same time, for the key-value pairs with $a_1$, they will be distributed to the reduce tasks with 100, 101, 110 and 111, respectively. The key-value pairs with $b_2$ will be distributed to the corresponding reduce tasks with 000, 001, 100, and 101, respectively. The key-value pairs with $c_2$ will be distributed to the corresponding reduce tasks with 000, 010, 100, and 110, respectively. All the data in Figure~\ref{fig:starjoin} are allocated as shown in Figure~\ref{fig:allocation}, in which we only list the keys of the key-value pairs. In practice, both the keys and values of the key-value pairs will be allocated together.  

\begin{figure}[htbp]
  \centering
  \includegraphics[scale=0.6]{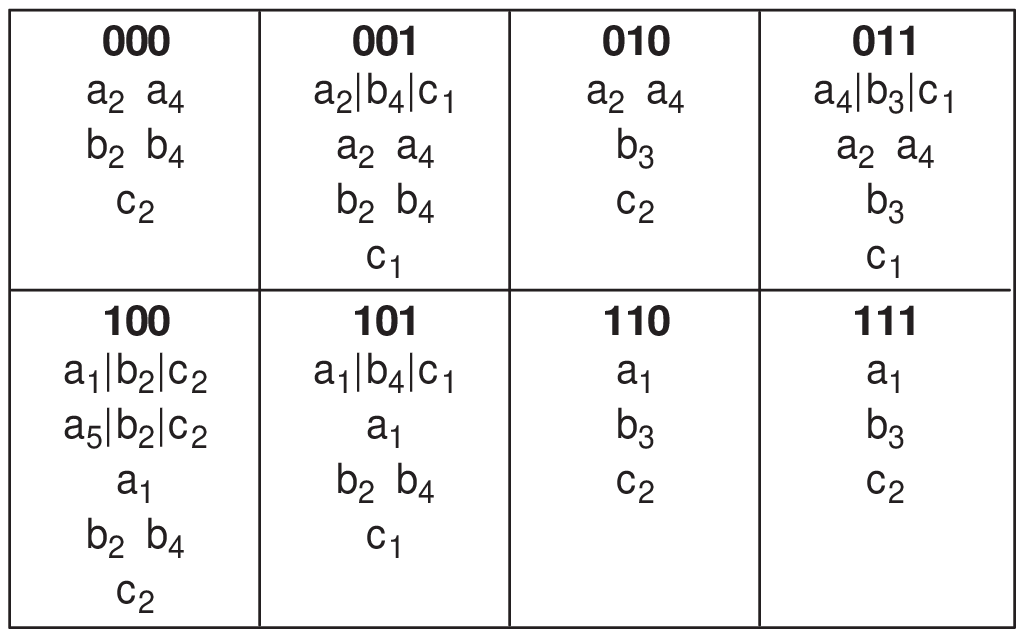}
  \caption{Demonstration of the Data Allocation}
  \label{fig:allocation}
  \end{figure}

\subsubsection{Allocate Reduce Tasks to Reducers Effectively}
The second question is: how can we allocate the reduce tasks to reducers effectively?
The naive method is to activate one reducer for each reduce task. But this method is often infeasible in real application. This is because we can generate different numbers of virtual reduce tasks for the same computational task while the number of available physical reducers is limited. Another possible method is to allocate the reduce tasks in a round-robin way. For example, the reduce tasks with the labels 000, 010, 100, and 110 can be computed at the first reducer while the rest tasks can be calculated at the second reducer. Although it intends to achieve the computation balance without consideration of data information, it often leads to unbalance of computation due to the data skew. 
In addition, different reduce tasks may have different computational workload. The reducers with the allocation of heavy workloads will take more processing time while the ones with the light workloads will take less processing time. Especially for the example in Figure~\ref{fig:allocation}, the reduce tasks with the labels 000, 010, 110, and 111 cannot make any contribution for the final results because no fact tuples appear in these tasks.
Therefore, these unuseful reduce tasks can be directly pruned without further computation, which can improve the performance a lot.  

After pruning the unuseful reduce tasks, the first reducer only need to process the reduce task (100) while the second reducer has to deal with the three reduce tasks (001, 011, 101). If we don't consider other information, then we can say that the computation over the two reducers have been balanced as much as possible because the first reducer needs to process two fact key-value pairs while the second reducer deals with three fact key-value pairs. However, if we do a simple sample over the reduce task (100), we can find that $a_5$ does not exist in the reduce task (100). Therefore, the fact key-value pair with key of $a_5|b_2|c_2$ can be pruned. At this moment, we can find that the workload ratio of the two reducers are $1:3$. To address the data skew, we can use the cost model in Section~\ref{subsec:uneven} to evaluate the cost of each reduce task and group them as even as possible. In this case, we can allocate the reduce tasks (100, 101) to the first reducer and the reduce tasks (001, 011) to the second reducer.  

\subsubsection{Detailed Procedure of Computing Statistical Information}
When the reducers pull all the relevant data from the corresponding mappers, it is time to build the local vol-arrays for the dimension and fact relations. For the reduce task (001), it gets the following key-value pairs: $(a_2|b_4|c_1,$ $''..., k_3,$ $w_2,$ $w_3'')$, $(a_2,$ $''...,$ $k_1,$ $w_1,$ $w_2,$ $w_3;$ $...,$ $k_1,$ $w_2$ $| 2'')$, $(a_2,$ $''...,$ $k_1,$ $w_3'')$, $(a_4,$ $''...,$ $k_1,$ $w_4'')$, $(b_2,$ $''...,$ $k_2'')$, $(b_4,$ $''...,$ $k_2,$ $w_3'')$ and $(c_1,$ $''...,$ $k_3,$ $w_1,$ $w_1,$ $...;$ $k_3,$ $w_1,$ $w_2,$ $...$ $| 2'')$. 
Firstly, we can build the num-arrays for the join attributes A, B and C, respectively. 

For the local num-array of $S^{\{k_1\}}$, we have

\scalebox{1}{
\begin{tabular}{|c|c|l|}  
    \hline
     attribute & num & text \\
     \hline
     $a_2$ & 3 & $''..., k_1, w_1, w_2, w_3; ..., k_1, w_2;$\\
     && $..., k_1, w_3''$\\     
     $a_4$ & 1 & $''..., k_1, w_4''$\\
    \hline
    \end{tabular}
}

For the local num-array of $T^{\{k_2\}}$, we have

\scalebox{1}{
\begin{tabular}{|c|c|l|}  
    \hline 
    attribute & num & text \\
     \hline
     $b_2$ & 1 & $''..., k_2''$ \\
     $b_4$ & 1 & $''..., k_2, w_3''$ \\
    \hline
    \end{tabular}
}

For the local num-array of $P^{\{k_3\}}$, we have 

\scalebox{1}{
 \begin{tabular}{|c|c|l|}  
    \hline
     attribute & num & text \\
     \hline
     $c_1$ & 2 & $''..., k_3, w_1, w_1, ...; k_3, w_1, w_2, ...''$\\    
    \hline
    \end{tabular}   
  }    
 
After we build the num-arrays for the local partitions, we need to process the fact key-value pairs one by one. Since the reduce task (001) only includes $(a_2|b_4|c_1, ''..., k_3, w_2, w_3'')$, the local vol-arrays of the fact and dimension tables can be built as follows.
   
For the local vol-array of $S^{\{k_1\}}$, we have

\scalebox{1}{
\begin{tabular}{|c|c|l|}  
    \hline
    attribute & volume & text \\
    \hline
     $a_2$ & 2 & $''..., k_1, w_1, w_2, w_3; ..., k_1, w_2;$ \\
     && $..., k_1, w_3''$\\      
    \hline
    \end{tabular}
  }
    
 For the local vol-array of $T^{\{k_2\}}$, we have

\scalebox{1}{
\begin{tabular}{|c|c|l|}  
    \hline 
    attribute & volume & text \\
     \hline    
     $b_4$ & 6 & $''..., k_2, w_3''$ \\
    \hline
    \end{tabular}
}

For the local vol-array of $P^{\{k_3\}}$, we have 

\scalebox{1}{
 \begin{tabular}{|c|c|l|}  
    \hline
     attribute & volume & text \\
     \hline
     $c_1$ & 3 & $''..., k_3, w_1, w_1, ...; k_3, w_1, w_2, ...''$\\    
    \hline
    \end{tabular} 
    }
    
 For the local vol-array of $R^{\phi}$, we have 

\scalebox{1}{
 \begin{tabular}{|c|c|l|}  
    \hline
     attribute & volume & text \\
     \hline
     $a_2|b_4|c_1$ & 6 & $''..., k_3, w_2, w_3''$\\    
    \hline
    \end{tabular}     
  }
    
Similarly, we can process the reduce tasks of 011, 100, and 101, respectively.
Since the dimension key-value pairs are often needed to be copied across different reduce tasks according to our adopted scheduling strategy, the by-product of the strategy is to avoid to re-pull the key-value pairs if they have been obtained by the reducers. By doing this, we can reduce the communication cost and guarantee the correctness of the results. 
For example, the reduce tasks 001 and 011 would be processed at the first reducer together. After we deal with the reduce task 001, the data information of $a_2$, $a_4$, $c_1$ are ready  at this reducer. For the reduce task 011, the reducer only needs to pull the necessary key-value pairs ($b_3$, $''..., k_2, w_2; ..., k_2, w_1, w_3 | 2''$) and ($a_4|b_3|c_1$, $''..., w_3''$). As such, only the local num-array of $T^{\{k_2\}}$ of the reduce task 001 needs to be updated by adding the key-value pair of $b_3$. For the updated local num-array of $T^{\{k_2\}}$, we have

\scalebox{1}{
\begin{tabular}{|c|c|l|}  
    \hline 
    attribute & num & text \\
     \hline
     $b_2$ & 1 & $''..., k_2''$ \\
     $b_3$ & 2 & $''..., k_2, w_2; ..., k_2, w_1, w_3''$ \\
     $b_4$ & 1 & $''..., k_2, w_3''$ \\
    \hline
    \end{tabular}   
    }
    
After processing the fact key-value pair ($a_4|b_3|c_1$, $''..., w_3''$), we can update the local vol-arrays of the fact and dimension tables as follows.
   
For the local vol-array of $S^{\{k_1\}}$, we have

\scalebox{1}{
\begin{tabular}{|c|c|l|}  
    \hline
    attribute & volume & text \\
    \hline
     $a_2$ & 2 & $''..., k_1,w_1,w_2,w_3;...,k_1,w_2;$ \\
           &   & $...,k_1,w_3''$\\ 
     $a_4$ & 4 & $''..., k_1, w_4''$ \\   
    \hline
    \end{tabular}
    }
    
 For the local vol-array of $T^{\{k_2\}}$, we have

\scalebox{1}{
\begin{tabular}{|c|c|l|}  
    \hline 
    attribute & volume & text \\
     \hline 
     $b_3$ & 2 & $''..., k_2, w_2; ..., k_2, w_1, w_3$\\   
     $b_4$ & 6 & $''..., k_2, w_3''$ \\
    \hline
    \end{tabular}
}

For the local vol-array of $P^{\{k_3\}}$, we have 

\scalebox{1}{
 \begin{tabular}{|c|c|l|}  
    \hline
     attribute & volume & text \\
     \hline
     $c_1$ & 5 & $''..., k_3, w_1, w_1, ...; k_3, w_1, w_2, ...''$\\    
    \hline
    \end{tabular} 
    }
    
 For the local vol-array of $R^{\phi}$, we have 

\scalebox{1}{
 \begin{tabular}{|c|c|l|}  
    \hline
     attribute & volume & text \\
     \hline
     $a_2|b_4|c_1$ & 6 & $''..., k_3, w_2, w_3''$\\  
     $a_4|b_3|c_1$ & 4 & $''..., w_3''$ \\ 
    \hline
    \end{tabular}   
}

Similarly, we can get the vol-arrays at the second reducer as follows.

For the local vol-array of $S^{\{k_1\}}$, we have

\scalebox{1}{
\begin{tabular}{|c|c|l|}  
    \hline
    attribute & volume & text \\
    \hline
     $a_1$ & 4 & $''...,k_1,...; k_1,...;$ \\           
    \hline
    \end{tabular}
    }
    
 For the local vol-array of $T^{\{k_2\}}$, we have

\scalebox{1}{
\begin{tabular}{|c|c|l|}  
    \hline 
    attribute & volume & text \\
     \hline 
     $b_2$ & 4 & $''..., k_2$\\   
     $b_4$ & 4 & $''..., k_2, w_3''$ \\
    \hline
    \end{tabular}
}

For the local vol-array of $P^{\{k_3\}}$, we have 

\scalebox{1}{
 \begin{tabular}{|c|c|l|}  
    \hline
     attribute & volume & text \\
     \hline
     $c_1$ & 2 & $''..., k_3, w_1, w_1, ...; k_3, w_1, w_2, ...''$\\ 
      $c_2$ & 2 & $''..., k_3, w_2, w_4; ...,k_3, w_2, w_3''$\\    
    \hline
    \end{tabular} 
    }
    
 For the local vol-array of $R^{\phi}$, we have 

\scalebox{1}{
 \begin{tabular}{|c|c|l|}  
    \hline
     attribute & volume & text \\
     \hline
     $a_1|b_2|c_2$ & 4 & $''w_1, w_2, ...''$\\  
     $a_1|b_4|c_1$ & 4 & $''...''$ \\ 
    \hline
    \end{tabular}   

}

\subsection{Computing the Term Frequency \\at MapReduce$^{2nd}$}

At MapReduce$^{2nd}$, we will output the final results - frequent co-occurrences with the given keyword query by utilizing the statistical information in \textit{vol-arrays} at MapReduce$^{1st}$. 

The map function takes as inputs the intermediate results of Map Reduce$^{1st}$, 
i.e., \textit{vol-arrays} consisting of three parts: \{join attribute, volume, text information\}. For each record in the \textit{vol-arrays}, we break the text information into token set by using any tokenization method and filter the stop words from the generated token set. For each distinct token, we can get its local frequency by counting the times of the token appearing in the filtered token set. And then, we generate the frequency of the token at the mapper by multiplying its local frequency and the volume of the record, which can be taken as the value $v_2$. The token is taken as the key $k_2$. 
At the reducer stage, the reduce function starts to compute the total term frequency. For a certain key, the reduce function pulls all the corresponding records with the key from all the mappers.

Let's take the term $w_1$ as an example to show the procedure of MapReducer$^{2nd}$. Assume there are two available mappers: the first mapper takes as inputs the output of the first reducer at Map Reducer$^{1st}$ and the second mapper takes as inputs the output of the second reducer at MapReducer$^{1st}$. For the first mapper, it scans each record in vol-arrays and generates the key-value pairs taking term as key and its cardinality as value. For the record $a_2$, it first computes the local frequency of $w_1$ as 1; and then it calculates the cardinality by multiplying the local frequency and the volume of the record, e.g., $1*2 = 2$; lastly it outputs a key-value pair as ($w_1$, 2). At the same time, the other key-value pairs of the terms in the record can be output. Similarly, the record $b_3$ outputs ($w_1$, $1*2=2$) and the record $c_1$ outputs ($w_1$, $3*5=15$). For the second mapper, it outputs the key-value pairs ($w_1$, $3*2=6$) by $c_1$ and ($w_1$, $1*4=4$) by $a_1|b_2|c_2$, respectively.

At the reducer stage, each reducer gets all the key-value pairs of the keys to be allocated to the reducer and adds the cardinalities of each term as the total frequency of the term. For $w_1$, its total frequency is $2+2+15+6+4=29$.     



\subsection{Propertities of MapReduce-based \\FCT Search}

    
\begin{theorem}\label{theorem:aggregation}(Aggregation Equal Transformation)  
The aggregation of the num-arrays and the vol-arrays to be built over independent reduce tasks is equal to those to be built over the aggregated data information of the independent reduce tasks.  
\end{theorem}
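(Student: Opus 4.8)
\emph{Proof plan.}\quad The plan is to reduce the statement to an entry-by-entry identity on the three objects that actually get aggregated --- the \emph{num}-array of each dimension relation, the \emph{vol}-array of each dimension relation, and the \emph{vol}-array of the fact relation --- and to derive each identity from the routing rule of the replication scheme recalled in Section~\ref{sec:preliminary}. Fix the star candidate network and let $RT$ be the set of its reduce tasks under consideration; for $rt\in RT$ let $I_{rt}$ be the sub-instance of tuples shipped to $rt$, and let $I=\bigcup_{rt\in RT} I_{rt}$ be the pooled instance, the union taken as a set so that a dimension tuple replicated to several tasks is kept once. For a dimension relation $S$ and a join value $a$ write $\nu^{S}_{rt}(a)$ and $\mathit{vol}^{S}_{rt}(a)$ for the local num- and vol-entries built at $rt$, and $\nu^{S}(a)$, $\mathit{vol}^{S}(a)$ for the entries built directly over $I$; use the analogous notation for the fact relation and a composite key $\bar k$. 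Aggregation combines num-entries and their text blocks by ``keep one consistent copy'' and vol-entries by summation, so the target identities are $\nu^{S}(a)=\bigoplus_{rt\in RT}\nu^{S}_{rt}(a)$, $\mathit{vol}^{S}(a)=\sum_{rt\in RT}\mathit{vol}^{S}_{rt}(a)$, and the fact-relation analogue.

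First I would settle the num-arrays. By the routing rule, a tuple $s\in S$ with join value $a$ is shipped to exactly the reduce tasks whose label carries $h(a)$ in the $S$-coordinate and ranges over all buckets in the remaining coordinates, and this set of tasks depends only on $a$. Hence for a fixed $a$ a task $rt$ either receives every tuple of $S$ with value $a$ or receives none of them, so $\nu^{S}_{rt}(a)\in\{0,\nu^{S}(a)\}$ and the text block recorded for $a$ at $rt$ is either empty or the full concatenation of those tuples' texts. Keeping one consistent copy across $RT$ therefore reproduces $\nu^{S}(a)$ and its text exactly, and the same reasoning applies to every dimension relation. This is the step that makes aggregation idempotent on the num- and text-components.

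Next I would treat the vol-arrays. For the fact relation it is immediate: the composite key of a fact tuple $r$ is hashed deterministically to the single task whose label is obtained by applying the per-attribute hashes to $r$'s join values, so the fact-tuple sets $R\cap I_{rt}$ partition $R$ and summation rebuilds every $\mathit{vol}^{R}(\bar k)$ from disjoint pieces. For a dimension relation $S$ and a value $a$, the local entry is, by construction, $\mathit{vol}^{S}_{rt}(a)=\sum_{r}\prod_{D}\nu^{D}_{rt}(r_{D})$, where the sum is over the fact tuples $r\in R\cap I_{rt}$ matching $a$ on $S$'s join attribute, the product is over the other dimension relations $D$ of the star network, and $r_{D}$ is the value $r$ takes on $D$'s join attribute. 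Two observations close the case: (i) by the routing rule the task receiving such an $r$ is one of the tasks that also receive value $a$, and it is the only task receiving $r$; (ii) by the previous paragraph the factors $\nu^{D}_{rt}(r_{D})$ equal the global factors $\nu^{D}(r_{D})$. Summing $\mathit{vol}^{S}_{rt}(a)$ over $rt\in RT$ therefore accumulates $\prod_{D}\nu^{D}(r_{D})$ over exactly the set of all fact tuples matching $a$, each counted once, which is precisely $\mathit{vol}^{S}(a)$; a dangling value contributes $0$ on both sides since then $\nu^{S}(a)=0$, and the text blocks travel with the entries and are handled as in the num case.

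The step I expect to be the crux is the bookkeeping of replication: a dimension value sits, with its \emph{full} multiplicity, inside many reduce tasks at once, so the aggregation has to be idempotent on the num/text parts and additive on the vol parts simultaneously, and one must check that the additive part never double-counts --- it only ever sums over fact tuples, which the routing rule makes pairwise disjoint, and never over the replicated dimension tuples. This is exactly the local-correctness property of the replication scheme of Section~\ref{sec:preliminary} restated: every MTJNT of a star candidate network is assembled inside a single reduce task (whence no inter-task communication) and none is assembled twice, which gives both the disjointness of the $R\cap I_{rt}$ and the co-location of each fact tuple with all dimension tuples it can join. I would finish with two routine remarks: the uneven-distribution scheduler of Section~\ref{subsec:uneven} merely regroups reduce tasks among physical reducers and never changes which fact and dimension tuples are co-located, so it is covered by the same argument; and pruning a reduce task that holds no fact tuple changes nothing, since such a task contributes $0$ to every vol-entry and only num/text entries that are already present elsewhere.
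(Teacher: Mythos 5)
Your proof is correct and follows essentially the same route as the paper: the central observation in both is that hashing on a dimension's join attribute co-locates all tuples sharing a join value, so each local num-entry (and its text) is either empty or already equal to the global one, after which the vol-arrays aggregate over the reduce tasks. You actually go further than the paper's own proof, which dismisses the vol-array case in one sentence as a consequence of the num-array case; your explicit bookkeeping --- each fact tuple is routed to exactly one reduce task, so the additive vol contributions are summed over a partition of the fact relation and never double-counted, while the replicated dimension entries are handled idempotently --- supplies precisely the detail the paper leaves implicit.
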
  
 \begin{proof}
 For the num-arrays, if a key appears in a reduce task, then all the key-value pairs with the same key must appear in the reduce task. It says that the local frequency (num) of the key should be the global frequency (num) of the key in the original relations. In other words, the num of a key in a reduce task can be used to serve all the reduce tasks that contain the key. Therefore, the aggregation of the num-arrays over independent reduce tasks can be equivalently transformed to that we first aggregate the distinct key-value pairs of the independent reduce tasks and then compute the num-arrays over the aggregated data. Because the equivalent transformation of num-arrays holds, the vol-arrays can also be built by alternatively accessing the aggregated data of the independent reduce tasks.  
 \end{proof}
 
 Based on the equivalent transformation in Theorem~\ref{theorem:aggregation}, the statistical results of one reduce task can be used for another reduce task if both reduce tasks include the same keys. Therefore, two corollaries can be derived as follows.  
   
\begin{corollary}(Incremental Computation) The num-arrays and the vol-arrays can be incrementally built across reduce tasks.
\end{corollary}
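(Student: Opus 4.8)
The plan is to derive the corollary directly from Theorem~\ref{theorem:aggregation} by realizing the incremental construction as one particular evaluation order of the aggregation treated there. First I would fix a single reducer and let $\tau_1, \tau_2, \ldots, \tau_m$ be the reduce tasks assigned to it, processed in that order. For each prefix I write $D_i$ for the set of \emph{distinct} key-value pairs pulled after $\tau_1, \ldots, \tau_i$ have been handled, so that $D_i = D_{i-1} \cup \Delta_i$ where $\Delta_i$ collects exactly the pairs of $\tau_i$ that are not already present in $D_{i-1}$. By Theorem~\ref{theorem:aggregation}, the num-array and vol-array computed from $D_i$ coincide with those computed from the (un-deduplicated) union of $\tau_1, \ldots, \tau_i$; hence it suffices to show that these arrays can be maintained under the update $D_{i-1} \mapsto D_i$ without revisiting the pairs of $D_{i-1}$.

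For the num-arrays, the key observation — already used inside the proof of Theorem~\ref{theorem:aggregation} — is that whenever a join-attribute value occurs in some reduce task, \emph{all} tuples carrying that value are routed to that same task, so the count stored against a key equals its global frequency in the underlying relation and is therefore final the moment the key is first encountered. Consequently processing $\Delta_i$ only inserts new entries (one per join value seen for the first time) and never perturbs an existing one, which is precisely an incremental build of the num-arrays.

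For the vol-arrays I would argue that each fact tuple is mapped, via its composite hash key, to exactly one reduce task, so the fact tuples handled across $\tau_1, \ldots, \tau_m$ form a partition; processing them one at a time accumulates volume contributions into the vol-array entries of both the fact relation and the dimension relations, and no contribution is ever added twice. Since the vol-array of a fact tuple is obtained by probing the \emph{current} num-arrays, which are correct by the previous paragraph, these volume updates are also correct and depend only on $\Delta_i$ together with the state carried over from $\tau_{i-1}$, giving the incremental construction for the vol-arrays as well. The one place that needs care — and the main obstacle — is the deduplication in the step $D_{i-1}\mapsto D_i$: because the shuffling strategy of Section~\ref{subsec:uneven} deliberately replicates dimension key-value pairs across reduce tasks, the incremental procedure must recognize an already-pulled dimension pair and skip it, and one must check that doing so leaves the num-arrays (hence the vol-arrays) identical to those produced by a batch computation over $\bigcup_{j\le i}\tau_j$ — which is exactly Theorem~\ref{theorem:aggregation} applied to the prefix $\tau_1,\ldots,\tau_i$.
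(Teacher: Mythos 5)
Your proposal is correct and follows essentially the same route as the paper, which omits a formal proof and simply derives the corollary from Theorem~\ref{theorem:aggregation} (local num equals global num once a key appears; each fact tuple lands in exactly one reduce task; replicated dimension pairs are deduplicated when carried across tasks). Your write-up just formalizes this prefix-by-prefix, matching the paper's worked example of reduce tasks 001 and 011 sharing data at the same reducer.
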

  
 \begin{corollary}(Data Filtering) The reducers only need to pull the necessary data information that have not been seen.
\end{corollary}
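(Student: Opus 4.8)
The plan is to derive this corollary as a direct consequence of the Aggregation Equal Transformation (Theorem~\ref{theorem:aggregation}) together with the incremental-computation corollary stated just above. The statement is essentially an \emph{efficiency-with-correctness} claim: when several reduce tasks are co-located on one physical reducer (as dictated by the load-balancing schedule of Section~\ref{subsec:uneven}), the reducer may skip pulling any key-value pair it has already obtained while serving an earlier task, and the statistical output is unchanged. I would structure the argument by first classifying the data a reduce task consumes into two categories, and then treating each category separately.

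First I would observe that the fact key-value pairs carry a composite key formed from all join attributes, so each fact tuple is hashed to exactly one reduce task by construction; consequently the fact side is never replicated and there is nothing to deduplicate there. The replication---and hence the opportunity for reuse---occurs only on the dimension side, where a single dimension key (e.g., $a_2$ on attribute $A$) is broadcast to every reduce task whose label agrees with the hash of that key on the relevant coordinate. The core step is then to invoke Theorem~\ref{theorem:aggregation}: the \emph{num} of a dimension key, together with the aggregated text attached to it, is a global invariant---it equals the value computed over the whole relation and does not depend on which reduce task is being processed. Therefore, once a reducer has pulled the key-value pairs for a given dimension key while building the num-array of one task, the identical entry serves every later task at the same reducer that references the same key, and re-pulling would merely reproduce data already resident in memory.

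With this invariance in hand, the corollary follows by maintaining at each reducer the union of dimension keys seen so far: when a new co-located task begins, the reducer pulls only (i) that task's own fact key-value pairs and (ii) the dimension keys not yet present in its accumulated num-arrays, leaving every previously seen key untouched. Correctness is immediate because, by the theorem, the reused num-array entries coincide with what a fresh pull would produce; completeness holds because the union of all data actually pulled across the co-located tasks equals the union those tasks collectively require. The running example developed earlier in this section instantiates exactly this behavior: after task $001$ the reducer already holds the information of $a_2$, $a_4$, and $c_1$, so for task $011$ it fetches only the new pair $b_3$ and the fact tuple $a_4|b_3|c_1$.

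The main obstacle I anticipate is disentangling the two kinds of array. The num-array is a pure function of the dimension keys and is genuinely reusable verbatim, but the vol-array is accumulated per fact tuple and must still be \emph{extended} as each new task contributes its own fact key-value pairs. The delicate point is therefore to argue that reuse legitimately applies to the dimension-side raw material (the nums and texts that feed the vol-array computation) while the vol-array entries themselves are correctly updated by processing only the newly pulled fact tuples---and that this split never causes a vol-array to miss a contribution, precisely because a fact tuple lives in a unique task and its dimension partners' statistics are already globally complete once pulled.
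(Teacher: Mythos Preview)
Your argument is sound and proceeds exactly along the line the paper indicates---derive the claim from Theorem~\ref{theorem:aggregation} via the global-invariance of the dimension-side \emph{num} entries, then observe that fact tuples are unreplicated so only dimension data can be reused. In fact the paper does not give a proof at all: immediately after stating this corollary it writes ``Since the derivations are easy to be understood, we do not provide their proofs in this paper,'' offering only the one-sentence remark that ``the statistical results of one reduce task can be used for another reduce task if both reduce tasks include the same keys.'' Your proposal is therefore not so much a different route as a fleshed-out version of the omitted argument; your careful separation of the reusable num-array from the incrementally extended vol-array, and your explicit appeal to the uniqueness of fact-tuple placement, supply precisely the details the paper leaves implicit.
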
            

Since the derivations are easy to be understood, we do not provide their proofs in this paper. According to the above two corollaries, we can further improve the performance of our approach by 
\begin{itemize}
\item Reducing the communication cost due to the avoidance of the repeated data to be pulled;
\item Accelarating the computation of reducers because the reducers can start to work early for the existing data that have been ready;
\item Avoiding the computation from scratch by incrementally maintaining the computational results of the reduce tasks that have been processed. 
\end{itemize}

\begin{corollary} (Correctness and Completeness) MapReduce-based FCT Search can compute the term frequencies for a keyword query over big data correctly and completely.
\end{corollary}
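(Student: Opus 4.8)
The plan is to show that the two-job procedure computes, for every term $w\notin q$, exactly the quantity $\text{freq}(q,w)$ of Equation~(\ref{equ:queryfrequency}), and to do so by peeling the definition apart layer by layer. First I would reduce the claim to the per-candidate-network level: since $\text{KSResult}(q)=\bigcup_{i=1}^{h}\text{MTJNT}(CN_i)$ with the sets $\text{MTJNT}(CN_i)$ pairwise disjoint (Equation~(\ref{equ:ksresult}) and the assumption stated above it), correctness and completeness of $\text{freq}(q,w)$ follow from correctness and completeness of each $\text{freq-CN}(CN_i,w)$; non-star CNs are handled by the repartition-join reduction already discussed, so it suffices to treat a single star CN with fact relation $R$ and dimension relations $S,T,P,\dots$.

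Second, I would establish the combinatorial identity underlying the whole method: for a star CN, the number of MTJNTs of $CN_i$ in which a fixed fact tuple $r$ (with join-key components $a,b,c,\dots$) occurs equals $\prod_{d}\text{num}_d(\cdot)$, the product over dimensions of the global counts of dimension tuples matching $r$; and the number of MTJNTs in which a fixed dimension tuple $u$ of dimension $d$ with join value $v$ occurs equals $\sum_{r\,:\,r\text{ matches }v}\prod_{d'\neq d}\text{num}_{d'}(\cdot)$. Hence $\text{freq-CN}(CN_i,w)=\sum_{u}\text{Count}(u,w)\cdot\text{vol}(u)$, the sum ranging over all tuples $u$ of $R$ and of every dimension, where $\text{vol}(u)$ is exactly the volume the reducer of MapReduce$^{1st}$ assigns to $u$. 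This is the step I expect to be the main obstacle, since it requires simultaneously (i) arguing that the hash-based replication of Section~\ref{subsec:uniformed} (or the cost-model scheduling of Section~\ref{subsec:uneven}) sends every fact tuple to exactly one reduce task together with every dimension tuple matching it, so no MTJNT is lost and none is witnessed twice when a fact tuple's volume is formed, and (ii) arguing that a dimension tuple's global volume — which may be fragmented across several reduce tasks, because its matching fact tuples can hash to different composite keys — is the sum of the partial volumes those tasks emit, while pruning the reduce tasks containing no fact tuples discards only zero contributions.

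Third, granted the identity, I would verify that MapReduce$^{1st}$ really produces these $\text{vol}(u)$ values. The num-arrays are correct by Theorem~\ref{theorem:aggregation}: all key-value pairs sharing a dimension join key land in the same reduce task, so the locally computed $\text{num}$ equals the global count; the incremental reuse of num-arrays and vol-arrays across reduce tasks co-located on one physical reducer, and the data filtering that pulls only unseen key-value pairs, change nothing by the two corollaries following that theorem. Processing fact tuples one at a time, the reducer multiplies the relevant num-array entries to obtain a fact tuple's volume and adds the complementary products into the dimension vol-arrays, which is precisely the decomposition above.

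Finally I would check MapReduce$^{2nd}$: its map function tokenizes the text of each vol-array record $u$, filters stop words and the query keywords so that only terms $w\notin q$ survive (as the definition demands), computes the local count $\text{Count}(u,w)$, and emits $(w,\ \text{Count}(u,w)\cdot\text{vol}(u))$; the reduce function for key $w$ sums these values over all records from all first-round reducers, yielding $\sum_{u}\text{Count}(u,w)\cdot\text{vol}(u)=\sum_{i=1}^{h}\text{freq-CN}(CN_i,w)=\text{freq}(q,w)$. Since every tuple occurrence is emitted exactly once (completeness) and with the correct multiplier (correctness), and the top-$k$ step is an exact selection over these totals, the corollary follows.
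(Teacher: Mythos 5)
Your proposal is correct, but it is organized quite differently from the paper's own argument, which is far terser: the paper proves the corollary in two moves — (1) under either shuffling strategy each fact tuple is sent to exactly one reduce task together with all the dimension data it needs, so each reduce task computes its term frequencies locally correctly and completely, and (2) Theorem~\ref{theorem:aggregation} then lets the per-task results be aggregated as if they had been computed over the union of the partitions, giving global correctness and completeness. You instead decompose along the semantics: per-CN disjointness of the MTJNT sets, then the explicit identity $\text{freq-CN}(CN_i,w)=\sum_u \text{Count}(u,w)\cdot \text{vol}(u)$, then a stage-by-stage check that MapReduce$^{1st}$ produces exactly the volumes $\text{vol}(u)$ and MapReduce$^{2nd}$ sums them. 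What your route buys is that it makes explicit two things the paper leaves implicit: the star-method counting identity (which the paper inherits from the cited \emph{star} algorithm and never restates in the proof) and the fact that a dimension tuple's volume is fragmented across several reduce tasks and only becomes correct after the second job's summation — in the paper this is silently absorbed into the appeal to Theorem~\ref{theorem:aggregation}, whereas you isolate it as your point (ii). Conversely, the paper's argument is shorter precisely because it does not re-derive the identity and treats "each reduce task can compute the frequencies independently" as given. The one step you flag as the main obstacle (exactly-one-task delivery of each fact tuple with all matching dimension tuples, plus zero-loss pruning of fact-free tasks) is exactly what the paper asserts from the shuffling strategies of Sections~\ref{subsec:uniformed} and \ref{subsec:uneven}, so your sketch of it is adequate and no genuine gap remains, provided you, like the paper, take the star-method identity itself as established by the prior work rather than something to be proved here.
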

\begin{proof}
According to the uniformed distribution-based shuffling strategy in Section~\ref{subsec:uniformed} or uneven distribution-based shuffling strategy in Section~\ref{subsec:uneven}, we can see that for each fact tuple, it will be sent to one reduce task, i.e., no duplicates across different reduce tasks. And for the fact and dimension partition data at each reduce task, they can be used to compute the term frequencies independently. Therefore, it guarantees the local correctness and completeness of MapReduce-based FCT Search for the partition data with regards to the reduce task.

Based on the aggregation equal transformation in Theorem~\ref{theorem:aggregation}, we can conclude the global correctness and completeness of MapReduce-based FCT Search because the aggregated results of all the reduce tasks can be equally transformed to compute the results over the aggregated data partitions (i.e., the original data).
\end{proof}

\section{Implementation of MapReduce-based FCT Search}\label{sec:implementation}

In the above sections, we have introduced the concepts of our MapReduce-based FCT search approach. Now we present its implementation, which includes the functions Map(), Reduce() and getPartition() of MapReduce$^{1st}$ and the brief description of Map Reduce$^{2nd}$, respectively.

\begin{algorithm}[ht]
  \caption{Map(key, a record) at MapReduce$^{1st}$} \label{algo:map} 
    
    \begin{algorithmic}[1]
    
   		\STATE key$_{new}$ = getJoinAttribute(key, the record);
    	\IF{Type(key$_{new}$) is identified as a dimension key}    	
    			
    			\STATE indexPos = getJoinAttrPosition(Type(key$_{new}$), joinAttrTypeSet[]);
    			
    			\FOR{(i=1; i$<=$ numDuplicates; i++)}\label{line:copybegin}
    					
    					\STATE cPartition = Integer.toX-naryString(i);
    					\STATE cPartition.insertBefore(`*', indexPos);
    					
    					\STATE Value$_{new}$ = getValue(key, the record);
    			
    					\STATE Emit(pair(indexPos,key$_{new}$), pair(cPartition,Value$_{new}$));
    			\ENDFOR \label{line:copyend}   			
    		
    	\ELSE
    	
    			\STATE keyset = getJoinAttribute(key, the record); \label{line:factnocopy1}
    			
    			\STATE priority = $\sum$getJoinAttrPosition(Type(key $\in$ keyset), joinAttrTypeSet[]);
    			\STATE key = keyset.toString(`\textbar');
    			\STATE Value$_{new}$ = getValue(keyset, the record);
    			\STATE Emit(pair(priority,key), Value$_{new}$);   \label{line:factnocopy2} 			
    	\ENDIF 
     \end{algorithmic}
  \end{algorithm}

 In Algorithm~\ref{algo:map}, we show the procedures in Map() at MapReduce$^{1st}$. When a dimension tuple is read, it first extracts as the new key the join attribute and generates as the value a string by combining the contents of the rest attributes. And then, it tags the key with a number where we use the position of its corresponding attribute column in fact relation. By doing this, we can guarantee at each reducer, the data belonging to the same dimension relation will be collected together. And it also tags the value with the partition to be copied, which is used to implement the multiway join based data partition, as shown in Line~\ref{line:copybegin}-Line~\ref{line:copyend}. For a fact tuple, the map function tags the key with the sum of the position numbers of their corresponding attribute columns in fact relation, which guarantees that all the fact tuples should arrive after all the dimension tuples at a reducer. Different from processing the dimension tuples, we don't need to tag the values because each fact tuple will be sent to only one partition as shown in Line~\ref{line:factnocopy1}-Line~\ref{line:factnocopy2}.

 \begin{algorithm}[ht]
  \caption{getPartition(key, value, numReduceTasks)  at MapReduce$^{1st}$} \label{algo:getpartition} 
    
    \begin{algorithmic}[1]
    
    	\STATE \{Comments: compute the number numDimPartitions of partitions for dimension relation from numReduceTasks\}; \label{line:numdimpartition}
    	\IF{key.second() does not contain `\textbar', i.e.,a dimension tuple}
					\STATE numPartition = (key.second().hashCode() $\&$ Integer.MAX\_VALUE) $\%$ numDimPartitions; \label{line:dimassign}    
		    	
		    	
		    	\STATE cPartition = value.first().replace(`*', numPartition); \label{line:cpartition1}
		    	\RETURN cPartition.toDecimal() as the number of partition;\label{line:cpartition2}
		  	
		  \ELSE  	
		  		\STATE new a string str=`';
    			\STATE keys[] = key.second().split('\textbar'); \label{line:factkey1}
    			\FOR{int i=0; i$<$keys.length; i++}   					
    				
    					\STATE str += (keys[i].hashCode() $\&$ Integer.MAX\_VALUE) $\%$ numDimPartitions;
    			\ENDFOR
    			\RETURN str.toDecimal() as the number of partition;\label{line:factkey2}
    	\ENDIF   			 		
    	
     \end{algorithmic}
  \end{algorithm} 
  
 For adapting to the multiway join based data partition, Algorithm~\ref{algo:getpartition} redesigns the function getPartition() of Hadoop. According to the specified number of reduce tasks, i.e., numReduceTasks, Line~\ref{line:numdimpartition} is used to calculate the number (denoted as numDimPartition) of partitions for each dimension relation using the derived equations, e.g., $a = \sqrt[3]{ks^2/tp}$, $b = \sqrt[3]{kt^2/sp}$ and $c = \sqrt[3]{kp^2/st}$ in Section~\ref{subsec:uniformed}.
 If the key-value pair comes from a dimension relation, we can compute the local partition number, with regards to the dimension relation, to be allocated by the key, as shown in Line~\ref{line:dimassign}. And then, it will be used to compute the global partition number by combining it with the partition numbers of the other dimension relations, as shown in Line~\ref{line:cpartition1}-Line~\ref{line:cpartition2}. Similarly, we can process the key-value pairs coming from fact relation. Differently, the key is often a composite key that consists of multiple single keys. Therefore, we need to first calculate the local partition number for each single key and then transform the set of local partition numbers into a global partition number, as shown in Line~\ref{line:factkey1}-Line~\ref{line:factkey2}.

\begin{algorithm}[ht]
  \caption{Reduce(key, a record) at MapReduce$^{1st}$} \label{algo:reduce} 
    
    \begin{algorithmic}[1]
    
    		\STATE \{Comments: the dimension tuples are always processed before fact tuples\};
    		\IF{key.second() is identified as a dimension key} \label{line:getnum1}
		    		
		    		\IF{hash$^{i}_{value}$.contains(key.first())}
		    		
		    				\STATE hash$^{i}_{num}$(key.first())=hash$_{num}$(key.first())+1;
		    		\ELSE
		    				\STATE hash$^{i}_{value}$(key.first())=value.second();
		    				\STATE hash$^{i}_{num}$(key.first())=1;
		    		\ENDIF \label{line:getnum2}
		    \ELSE
		    		\STATE \{Comments: all the dimension tuples belonging to this partition have arrived\};
		    		
		    		\STATE keys = key.first().split(`\textbar'); \label{line:volume1}
		    		
		    		\STATE num$_i$ = hash$^{i}_{num}$(keys[i]);
		    		
		    		\IF{any num$_i$ $\neq$ 0}
		    			
		    				\STATE hash$^{f}_{value}$(key) = value;
		    				\STATE hash$^{f}_{vol}$(key) = $\prod$ num$_i$;
		    				
		    				\STATE hash$^{i}_{vol}$(keys[i]) += $\prod_{j \neq i}$ num$_j$; 
		    		\ENDIF \label{line:volume2}
		    		
		    \ENDIF   
   	 
   	 		\STATE \{Comments: generate inputs for the reducer at MapReduce$^{2nd}$\};
   	 		
   	 		\FOR{each item $x$ in hash$^{f}_{vol}$ or hash$^{i}_{vol}$} \label{line:output1}
   	 		
   	 				\FOR {each word $w$ in hash$^{\text{f or i}}_{value}$($x$)}
   	 				
   	 						\STATE Emit($w$, hash$^{\text{f or i}}_{vol}$($x$));
   	 				\ENDFOR
   	 				
   	 		\ENDFOR \label{line:output2}
     \end{algorithmic}
  \end{algorithm} 
  
  Algorithm~\ref{algo:reduce} can be divided into three stages. At the first stage in Line~\ref{line:getnum1}-Line~\ref{line:getnum2}, it calculates the total number of dimension records with the same join attribute as key. At the second stage in Line~\ref{line:volume1}-Line~\ref{line:volume2}, it calculates the volume for each fact tuple using $\prod$ num$_i$ and the total volumes for each join attribute using $\prod_{j \neq i}$ num$_j$, respectively. At the third stage in Line~\ref{line:output1}-Line~\ref{line:output2}, it generates the intermediate results that will be taken as the inputs of the reducer at MapReduce$^{2nd}$. Since the tag of the fact keys is always larger than that of the dimension keys, any dimension relation has the higher priority than the fact relation. Therefore, we guarantee that the three stages can be processed in a stable sequence.

    
    				

    	
  
The rest work is similar to the classific example of word count using MapReduce. We take as the inputs the intermediate results of the reducer at MapReduce$^{1st}$. And we calculate the total frequencies of each term. 
After that, the merge-sort operation is applied to the outputs of the reducers at MapReduce$^{2nd}$. As such, the top-$k$ frequent words or terms will be found.
  


\section{Experiments}\label{sec:experiments}
In this section, we study the performance of our proposed MapReduce-based FCT search approach. All experiments were performed on a 9-machine cluster running Hadoop 1.0.3 \cite{web:hadoop} at SwinCloud platform~\footnote{hadoop.ict.swin.edu.au}.
One machine served as the Head Node running CentOS-5 Linux with 500 GB hard disk allocated as DFS storage. The Head Node also serves as the Name Node and JobTracker at the same time. While the other 8 machines as Worker Nodes are the general PC with 1 GB RAM, which can be used for Map and Reduce tasks. And each Worker Node is configured to run one map and one reduce task concurrently. The distributed file system block size is set to 64MB. Only the Head Node takes the role of storage node for the DFS. All the machines are connected via a Gigabit-Ethernet network. 

\subsection{Selection of Dataset and Keyword Queries}
As TPC-H \cite{web:tpch} is the most widely used big data benchmark in MapReduce study, e.g., \cite{DBLP:conf/sigmod/LinACOW11,DBLP:journals/tkde/JiangTC11,DBLP:conf/icde/YangYTM10}, we generate a set of datasets with different sizes. In order to demonstrate the performance of the multiway-join in MapReduce, we directly link the PART relation and the SUPPLIER relation to the relation LINEITEM, by which the relation LINEITEM is taken as the fact relation while the relations PART, SUPPLIER and ORDERS are considered as the dimension relations. In addition, The original TPC-H Schema can be seen at \cite{web:tpch}. 

\begin{figure}[htbp]
  \centering
  \subfigure[Star-Type]{\label{fig:querytype1}
    \includegraphics[scale=0.5]{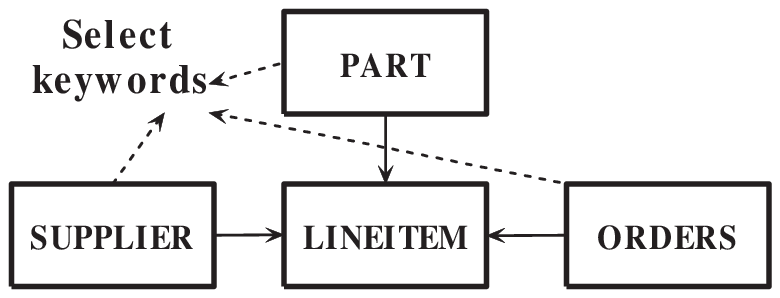}}
 \hskip 0.05in
  \subfigure[Chain-Type]{\label{fig:querytype2}
    \includegraphics[scale=0.5]{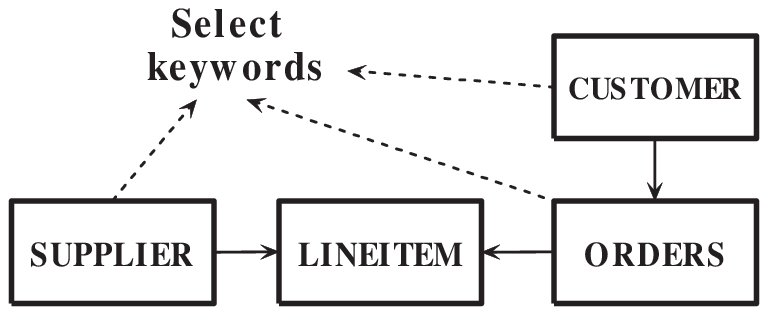}} 
   \\     
  \subfigure[Mix-Type]{\label{fig:querytype3}
    \includegraphics[scale=0.5]{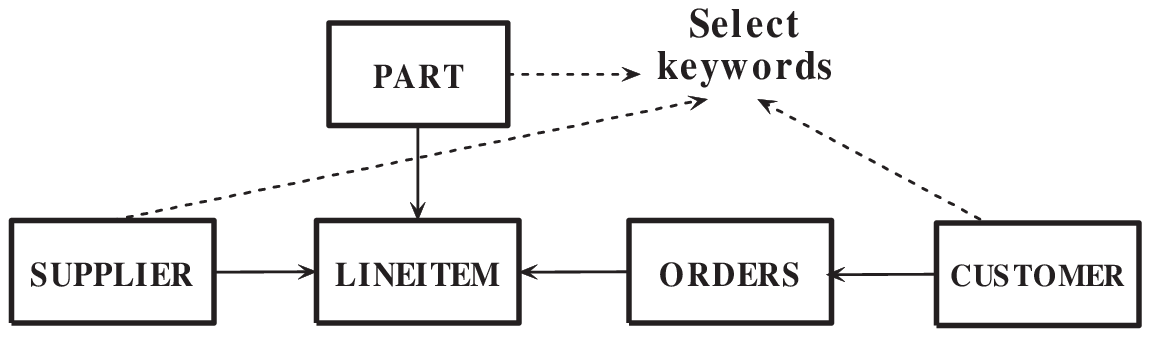}} 
  \caption{Designed Query Types}\label{fig:querytype} 
\end{figure} 

To test the stability of our approach, we design three types of keyword queries as shown in Figure~\ref{fig:querytype}. In order to guarantee the result sets of the generated keyword queries are not empty, we adopt the following steps to generate keyword queries and record their experimental results. Let's take the type$_{1}$ in Figure~\ref{fig:querytype1} as an example: 
\begin{itemize}
\item run the structured queries and output the results as three bags of texts, e.g., for star-type, we have: select bag(p*), bag(s*), bag(o*) from part p, supplier s, lineitem li, orders o where p.partkey = li.partkey \& s.suppkey = li.suppkey \& o.orderkey = li.orderkey; 
\item choose the terms from different bags as the query keywords; 
\item For each keyword query type, we generate 3 batches of keyword queries where each batch contains 10 random keyword queries. 
\end{itemize}
In the following study, the average performance of each batch of keyword queries are used to make comparison, e.g., $Q_1$, $Q_2$, and $Q_3$ represent the three batches for Type$_1$; $Q_4$, $Q_5$, and $Q_6$ represent the three batches for Type$_2$; $Q_7$, $Q_8$, and $Q_9$ represent the three batches for Type$_3$.         

To illustrate the advantages of parallel platforms, we first run the FCT search of the query batch Q$_1$ over 1GB dataset in single machine platform. Although the single machine has 4GB RAM, 500GB hard disk and it does not need shuffle operations, it still consumes about 4.5 mins to complete the FCT search of Q$_1$, which is much expensive than that (about 1.83 mins) of our parallel platform consisting of 8 worker nodes. Therefore, the following experimental studies are only focused on the evaluation of our approach in the parallel platform.  

\begin{figure*}[ht]
\hfill
\begin{minipage}[t]{.3\textwidth}
\begin{center}
 \epsfig{file=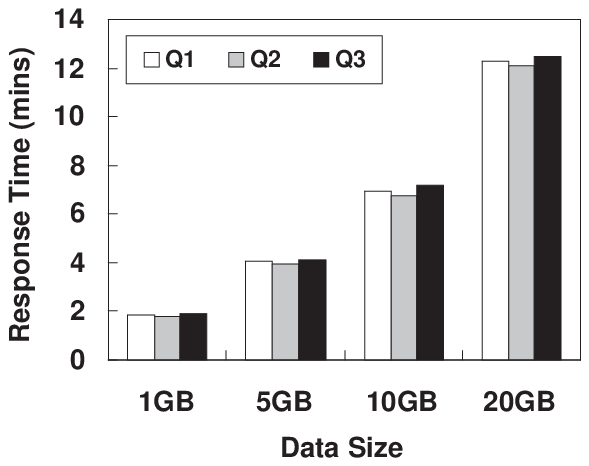, scale=0.85}
  \caption{Response Time of Star-Type Queries}
  \label{fig:startime}
  \end{center}   
\end{minipage}
\hfill 
\begin{minipage}[t]{0.3\textwidth}
\begin{center}
 \epsfig{file=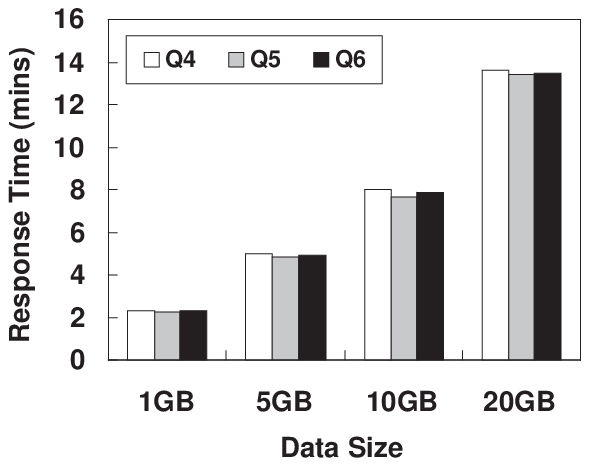, scale=0.85}
  \caption{Response Time of Chain-Type Queries}
  \label{fig:chaintime} 
    \end{center}  
\end{minipage}
\hfill
\begin{minipage}[t]{0.3\textwidth}
\begin{center}
 \epsfig{file=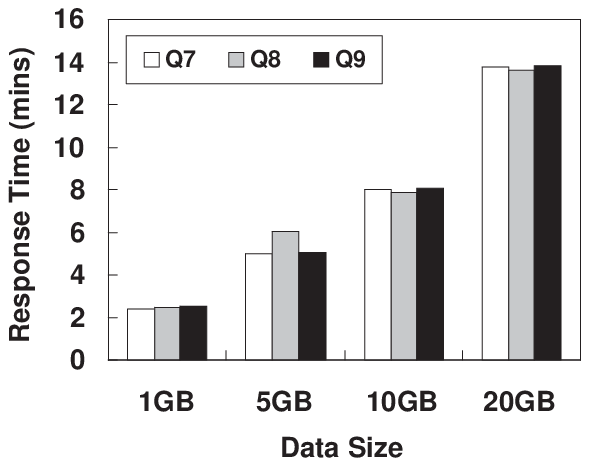, scale=0.85}
  \caption{Response Time of Mix-Type Queries}
  \label{fig:mixtime} 
      \end{center} 
\end{minipage}
\hfill
\\
\hfill
\begin{minipage}[t]{.3\textwidth}
\begin{center}
 \epsfig{file=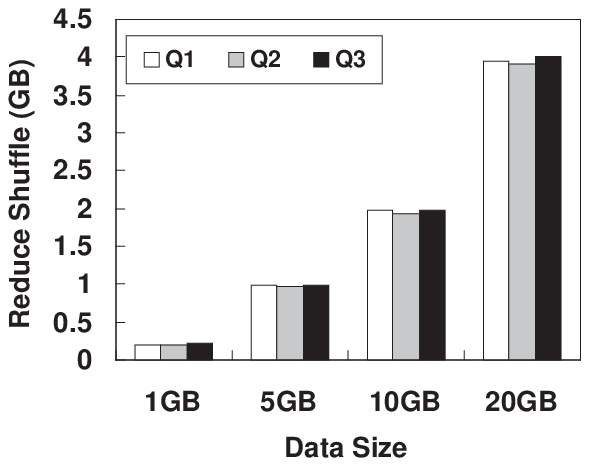, scale=0.85} 
  \caption{Reduce Shuffle of Star-Type Queries}\label{fig:starshuffle}
  \end{center}   
\end{minipage}
\hfill 
\begin{minipage}[t]{0.3\textwidth}
\begin{center}
 \epsfig{file=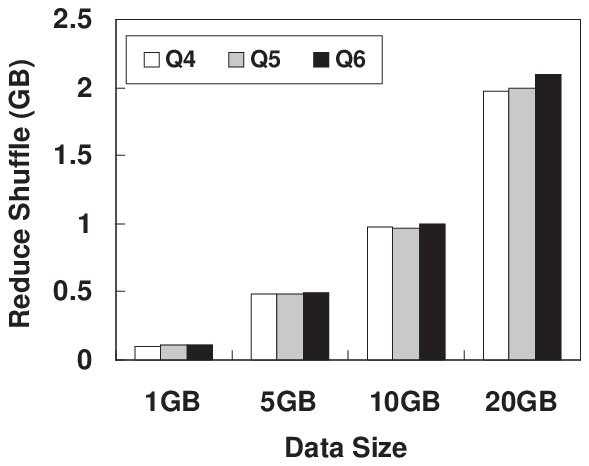, scale=0.85} 
  \caption{Reduce Shuffle of Chain-Type Queries}\label{fig:chainshuffle} 
    \end{center}  
\end{minipage}
\hfill
\begin{minipage}[t]{0.3\textwidth}
\begin{center}
 \epsfig{file=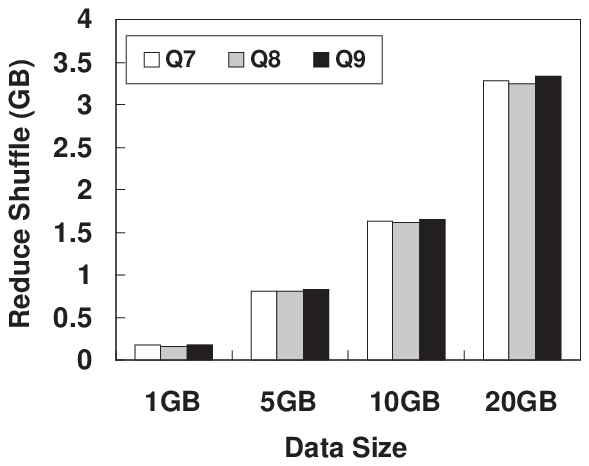, scale=0.85} 
  \caption{Reduce Shuffle of Mix-Type Queries}\label{fig:mixshuffle} 
     \end{center} 
\end{minipage}
\hfill
\end{figure*}


\subsection{Response Time}



Figure~\ref{fig:startime}-Figure~\ref{fig:mixtime} provide the response time of FCT search when we process the selected keyword queries over the TPC-H dataset with different sizes: 1GB, 5GB, 10GB and 20GB, respectively. 
For processing the query batches Q$_4$-Q$_9$, we first merge the two relations CUSTOMER and ORDERS, and then run the multiway-based MapReduce join by taking the LINEITEM as the fact relation.

From the experimental results, we find that most of time is spent on the MapReduce$^{1st}$ stage. For example, for the query batch Q$_1$, the MapReduce$^{1st}$ stage consumes 1.25 mins while the MapReduce$^{2nd}$ stage takes 0.6 mins for 1GB dataset; 
the first stage consumes 3.25 mins while the second takes 0.8 mins for 5GB dataset; 
the first stage consumes 6.1 mins while the second takes 0.81 mins for 10GB dataset; 
the first stage consumes 11.33 mins while the second takes 0.93 mins for 20GB dataset;
For other query batches, we can get the similar observations that the first stage takes the high percentage of the total response time. 
In addition, from the experimental results, we can find that map() in MapReduce$^{1st}$ stage takes about 0.33 mins to load a block with size of 64MB and the loading balance can be guaranteed by splitting the dataset into multiple blocks.

For instance, consider Q$_1$, 5GB dataset and 8 mappers, it is splitted into 68 map tasks and each mapper approximately load 8 number of blocks. As such, the map stage may take about 0.33*8 = 2.64 mins to finish all mappers' workloads. At the reduce stage, the shuffle() takes high time cost than sort() and reduce(), e.g., shuffling spends 1.91 mins while sorting takes 0.1 mins and reduce() takes 0.43 mins for one reducer in processing 5GB dataset using 8 reducers. Fortunately, the shuffling can be processed in parallel at the map stage. Based on this, we can find that the response time can be minimized to max\{2.64, 1.91\} + 0.1 + 0.43 = 3.17 mins at most with regards to the 5GB dataset and 8 worker nodes.   

From the result analysis, we can get that the total response time constrains to the maximal value of loading time and the shuffling time. To reduce the loading time, we can add more worker nodes into the cluster. 
To reduce the shuffling time, we send each fact tuple into one reduce task and copy the required dimension tuples into their corresponding reduce tasks based on our proposed scheduling strategy, which can reduce the shuffling operation times because generally fact relation is much larger than dimension relations.


\subsection{Reduce Shuffle Size}




Figure~\ref{fig:starshuffle}-Figure~\ref{fig:mixshuffle} show the reduce shuffle space usage when we process the selected keyword queries over the TPC-H dataset with different sizes: 1GB, 5GB, 10GB and 20GB, respectively. From Figure~\ref{fig:starshuffle}, we find that the shuffle space usage approximately takes 20\% of the dataset size. From Figure~\ref{fig:chainshuffle}, we find that the shuffle space usage approximately takes 10\% of the dataset size. From Figure~\ref{fig:mixshuffle}, we find that the shuffle space usage approximately takes 17\% of the dataset size. This is because for Chain-Type and Mix-Type queries, the ORDERS relation can be reduced by joining with the CUSTOMER relation, which can reduce the number of ORDERS tuples to involve in the multiway-join of MapReduce. Particularly, for Chain-Type queries, its multiway-join uses two attributes as the composite key, e.g., suppkey and orderkey in Figure~\ref{fig:querytype}. In this case, the number of copies for a dimension tuple is much smaller than that of Star-Type taking three attributes as the composite key.    

From experimental results of Q$_1$ over 10GB dataset, we find that the shuffle space usages of the 8 reducers are unbalanced. For half of the reducers, their individual shuffle space cost is approximately 344MB, in which the number of reduce input records is 10,843,452. While for the other four reducers, their individual shuffle space cost is approximately 144MB, in which the number of reduce input records is 4, 070, 586. 
From the result analysis, we can get that the unbalanced shuffling often happens when we deal with big data, which  may affect the total performance greatly. This is also the reason that the shuffling time cost takes the high percentage of the time cost of reduce stage at MapReduce.
 
\subsection{Verifying Uneven Distribution-based Shuffling Strategy}

\begin{figure}[htbp]
  \centering  
    \includegraphics[scale=0.85]{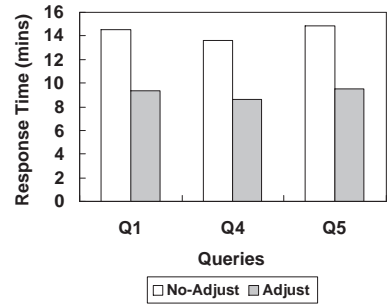}    
  \caption{Response Time of Processing Uneven Data Distribution}\label{fig:uneventime} 
\end{figure}

\begin{figure}[htbp]
  \centering
     \includegraphics[scale=0.85]{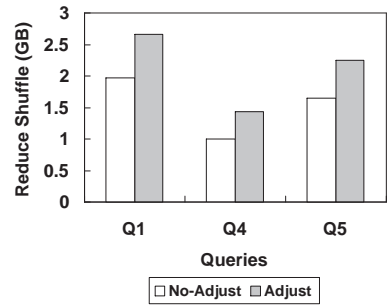} 
  \caption{Reduce Shuffle Cost of Processing Uneven Data Distribution}\label{fig:unevenshuffle} 
\end{figure}


In order to verify the uneven distribution-based shuffling strategy, we modify the 10GB data set into a dataset with much higher data skew by removing some tuples and repeatedly adding some other tuples. To make a tradeoff between the estimated precision of the sampling and its performance, we set a medium number of reduce tasks, e.g., 27 reduce tasks in this experiments. This is because generating a large number of reduce tasks may balance the workloads of the reducers as much as possible, while it may consume more time to do the samples for the big number of reduce tasks. Since we have 8 available reducers, each reducers may process about 3 reduce tasks if these reduce tasks are partitioned in balance. To illustrate the effect of balancing the workloads of reducers, we compare the performance of the query batches Q$_1$, Q$_4$ and Q$_5$ over the 10GB dataset when the number of reduce tasks are set as 8 and 27, respectively. And we still use 8 reducers (worker nodes).  

Figure~\ref{fig:uneventime} and Figure~\ref{fig:unevenshuffle} show the response time and the shuffle space usage, respectively. The label \textit{No-Adjust} corresponds to the case of  generating 8 reduce tasks while the label \textit{Adjust} specifies the case of having 27 reduce tasks. By adjusting the workloads of reducers, the response time can be reduced by about 36.03\% as shown in Figure~\ref{fig:uneventime} because the hot reducers can be alleviated. However, the shuffle space usage may be increased by about 38.16\% because a dimension tuple may be copied into more reduce tasks. Although the data size of shuffle is increased, many data can be compressed before the reducers pull because the reduce tasks to be sent to the same reducers may include more duplicates of dimension tuples. 









\section{Related Work}\label{sec:relatedwork}

 
\subsection{Keyword Query}
\cite{DBLP:conf/icde/YaoCHH12} addressed the problem of keyword query reformulation
in the structured data. These reformulated queries provided alternative descriptions
of an original keyword query. To do this, they first extracted the term relations in an
offline mode and then generated a set of new queries.
\cite{DBLP:journals/pvldb/SarkasBDK09} proposed a search model, similar in spirit to 
faceted search, that enables the progressive refinement of a keyword query result. The 
refinement process was driven by suggesting interesting expansions of the original query 
with additional search terms.
\cite{DBLP:conf/sigmod/ChuBCDN09} proposed to take as input a target database and then generated and indexed a set of query forms offline. At query time, a user with a question to be answered issued standard keyword search queries; but instead of returning tuples, the system returned forms relevant to the question. The user may then build a structured query with one of these forms and submit it back to the system for evaluation.
\cite{DBLP:journals/pvldb/PuY08,DBLP:conf/icde/LuWLL11} introduced the problem of query cleaning for keyword search queries in a database context and proposed a set
of effective and efficient solutions. Our FCT search approach can be taken as the supplementary tool to improve the effectiveness and efficiency of the above works when they process big data. 


\subsection{Join Operations in MapReduce}

Repartition Join \cite{DBLP:conf/sigmod/BlanasPERST10} is a two-way based join strategy, which is the most commonly used join strategy
in the MapReduce framework. In this join strategy, L
and R are dynamically partitioned on the join key and the
corresponding pairs of partitions are joined. The repartition join is used in our experiments when we join the relations CUSTOMER and ORDERS. Variants of the standard
repartition join are used in Pig \cite{DBLP:conf/sigmod/OlstonRSKT08}, Hive \cite{web:apachehive} and Jaql \cite{web:jaql} today. 
Another two-way based join in MapReduce is the Broadcast Join \cite{DBLP:conf/sigmod/BlanasPERST10} where the reference table R is much smaller
than the log table L, i.e. $|R|$ $<<$ $|L|$. Instead of moving
both R and L across the network as in the repartition-based
joins, it can simply broadcast the smaller table R, as it
avoids sorting on both tables and more importantly avoids
the network overhead for moving the larger table L.
However, two-way based join strategies are not suitable to process multiway join in MapReduce because it will run more MapReduce jobs.

Besides our adopted lagrangean multipliers based mutliway join strategy \cite{DBLP:conf/edbt/AfratiU10,DBLP:journals/tkde/AfratiU11}, there is another heauristic multiway join strategy in \cite{DBLP:journals/tkde/JiangTC11} that also focused on the join strategy that a dataset (we denoted it as fact dataset)
has more than one join columns with the other datasets (we denote as dimension datasets). 
Similar to \cite{DBLP:conf/edbt/AfratiU10}, 
\cite{DBLP:journals/tkde/JiangTC11} designed the partition function over dimension datasets and then the fact dataset can be splitted based on the partition functions of dimension datasets.
Differently, \cite{DBLP:journals/tkde/JiangTC11} couldn't determine the optimal number of partitions for each dimension datasets, and just proposed a heuristic approach to solve
the optimization problem. However, from  \cite{DBLP:conf/edbt/AfratiU10,DBLP:journals/tkde/AfratiU11}, we can derive the optimal number of each dimension datasets to be splitted.

\section{Conclusions}\label{sec:conclusions}
In this paper, we studied the problem of query-driven FCT search over big data in parallel using the MapReduce framework. We proposed a MapReduce-based FCT search approach that  consists of two MapReduce jobs: the first is to calculate the statistical information of the query over the big data while the second is to compute the term frequencies.  
In addition, we showed how to partition the data across worker nodes in order to balance their workloads when data distribution is uniformed or skewed. 
We also described the detailed procedures of the two MapReduce jobs and their implementation algorithms. 
At last, the MapReduce-based FCT search approach is implemented over our university cloud platform \textit{SwinCloud}. We conducted the experiments over the TPC-H benchmark datasets with different sizes and data distributions. From the experimental analysis, we concluded that the main time cost of our approach were spent on the data loading and intermediate data shuffling, which can be improved by adding more worker nodes and copying the dimension tuples in an optimal way.


\bibliographystyle{IEEEtran}

\bibliography{KS,MR}

\end{document}